\newtheorem{proposition}{Proposition}
\newtheorem{remark}{Remark}
\crefname{figure}{Fig.}{Figs.}
\crefname{equation}{}{}
\crefname{table}{Table}{Tables}
\crefname{algorithm}{Alg.}{Algorithms}
\crefname{section}{Section}{Sections}
\crefname{remark}{Remark}{Remarks}
\newcommand\edit[1]{\textcolor{blue}{#1}}
\newcommand\cmt[1]{\textcolor{purple}{#1}}
\newcommand\david[1]{\textcolor{orange}{[DFK: #1]}}
\newcommand\jingqi[1]{\textcolor{blue}{[JL: #1]}}
\newcommand\yang[1]{\textcolor{pink}{[YT: #1]}}
\renewcommand\jingqi[1]{\textcolor{brown}{#1}}
\renewcommand\yang[1]{\textcolor{olive}{#1}}
\renewcommand\edit[1]{\textcolor{blue}{#1}}
\renewcommand\cmt[1]{}
\renewcommand\david[1]{_D_D}
\renewcommand\jingqi[1]{_J_J}
\renewcommand\yang[1]{_Y_Y}
\renewcommand\edit[1]{_E_E}
\newcommand{\T}{\top}
\newcommand\LQcontrolfunction{\bar{u}^{i,i}}
\newcommand\LQtotalvariable{\bar{\mathbf{z}}}
\newcommand\numLQtotalvariable{n_{\LQtotalvariable{}}}
\newcommand\secondLQtotalvariable{\bar{\mathbf{z}}}
\newcommand\totalvariable{\mathbf{\hat{z}}}
\newcommand\initstate[1]{x_{\textrm{init}}^{#1}}
\newcommand\xstate[2]{x^{#2}_{#1}}
\newcommand\StateTraj[1]{\mathbf{x}^{#1}}
\newcommand\obs[2]{y^{#2}_{#1}}
\newcommand\ObsTraj[1]{\mathbf{y}^{#1}}
\newcommand\ctrl[2]{u^{#2}_{#1}}
\newcommand\CtrlTraj[1]{\mathbf{u}^{#1}}
\newcommand\equalityconstraint[1]{h^{#1}}
\newcommand\inequalityconstraint[1]{g^{#1}}
\newcommand\ObservationFn{G}
\newcommand\dynamicsLagrangeVar{\lambda}
\newcommand\dynamicsInitLagrangeVar{\eta}
\newcommand\equalityConstraintLagrangeVar{\phi}
\newcommand\inequalityConstraintLagrangeVar{\mu}
\newcommand\dynamicsLagrange[2]{\dynamicsLagrangeVar{}^{#2}_{#1}}
\newcommand\dynamicsInitLagrange[1]{\dynamicsInitLagrangeVar{}^{#1}}
\newcommand\lagrangian[1]{\mathcal{L}^{#1}}
\newcommand\LQlagrangian[1]{\bar{\mathcal{L}}^{#1}}
\newcommand\DynamicsLagrange[1]{\boldsymbol{\dynamicsLagrangeVar{}}^{#1}}
\newcommand\DynamicsInitLagrange[1]{\dynamicsInitLagrangeVar{}^{#1}}
\newcommand\hypdynamicsLagrange[2]{\hat{\dynamicsLagrangeVar{}}^{#2}_{#1}}
\newcommand\hypdynamicsInitLagrange[1]{\hat{\dynamicsInitLagrangeVar{}}^{#1}}
\newcommand\hypequalityConstraintLagrange[1]{\boldsymbol{\hat{\equalityConstraintLagrangeVar{}}}^{#1}}
\newcommand\hypinequalityConstraintLagrange[1]{\boldsymbol{\hat{\inequalityConstraintLagrangeVar{}}}^{#1}}
\newcommand\hypDynamicsLagrange[1]{\boldsymbol{\hat{\dynamicsLagrangeVar{}}}^{#1}}
\newcommand\hypDynamicsInitLagrange[1]{\hat{\dynamicsInitLagrangeVar{}}^{#1}}
\newcommand\hypEqualityConstraintLagrange[1]{\boldsymbol{\hat{\equalityConstraintLagrangeVar{}}}^{#1}}
\newcommand\hypInequalityConstraintLagrange[1]{\boldsymbol{\hat{\inequalityConstraintLagrangeVar{}}}^{#1}}
\newcommand\numconstlagrange{n_g}
\newcommand\numequalitylagrange{n_h}
\newcommand\params[1]{\hat{\theta}^{#1}}
\newcommand\Params{\hat{\theta}}
\newcommand\gtparams[1]{\theta^{#1}}
\newcommand\GtParams{\theta}
\newcommand\secondgtparams[1]{\Theta^{#1}}
\newcommand\GtParamsguess[1]{\hat{\Theta}^{#1}}
\newcommand\LevelOneGtParamsguess{\bar{\Theta}}
\newcommand\loss{L}
\newcommand{\numplayers}{N}
\newcommand\paramspace[1]{\mathcal{P}^{#1}}
\newcommand\jointparamspace{\mathcal{P}}
\newcommand\numstates[1]{n_x^{#1}}
\newcommand\numctrls[1]{n_u^{#1}}
\newcommand\numparams[1]{n_p^{#1}}
\newcommand\numobs[1]{n_o^{#1}}
\newcommand\playeridx{i}
\newcommand\secondidx{j}
\newcommand\horizon{T}
\newcommand\dynamics[1]{f^{#1}}
\newcommand\costfn[1]{J^{#1}}
\newcommand\stagecostfn[2]{\ell_{#1}^{#2}}
\newcommand{\tsim}{\tau}
\newcommand{\Tsim}{\horizon{}_{\tsim{}}}
\newcommand\thetavec{\mathbb{\Theta}}
\newcommand\constructedlevelonesol{\check{\Theta}}
\newcommand{\initstateestimatenotime}[1]{\mathbf{\hat{\xstate{}{}}}_{0}}
\newcommand\lat{\text{lat}}
\newcommand\lon{\text{lon}}
\newcommand\hypxstate[2]{\hat{x}^{#2}_{#1}}
\newcommand\hypctrl[2]{\hat{u}^{#2}_{#1}}
\newcommand\hypothesizedStateTraj[1]{\mathbf{\hat{\xstate{}{}}}^{#1}}
\newcommand\hypothesizedCtrlTraj[1]{\mathbf{\hat{\ctrl{}{}}}^{#1}}
\newcommand\hypStateTraj[1]{\hypothesizedStateTraj{#1}}
\newcommand\hypCtrlTraj[1]{\hypothesizedCtrlTraj{#1}}
\newcommand\LinearKKTA{A}
\newcommand\LinearKKTB{B}
\newcommand\LinearKKTBarA{\bar{\LinearKKTA{}}}
\newcommand\LinearKKTBarB{\bar{\LinearKKTB{}}}
\newcommand\LinearKKTBarC{\bar{C}}
\newcommand\LinearKKTBlkDiagA{A}
\newcommand\LinearKKTBlkDiagB{B}
\newcommand\LinearKKTBlkDiagC{C}
\newcommand\LinearKKTBigBlkDiagA{\widehat{\LinearKKTBlkDiagA{}}}
\newcommand\LinearKKTBigBlkDiagB{\widehat{\LinearKKTBlkDiagB{}}}
\newcommand\LinearKKTBigBlkDiagC{\widehat{\LinearKKTBlkDiagC{}}}
\title{Level-2 Inverse Games for Inferring Agents' Estimates of Others' Objectives
% Level-2 Inverse Games for Inferring Agents' Estimates of Others' Objectives
}
\author{
Hamzah I. Khan${}^*$,
Jingqi Li${}^*$,
and David Fridovich-Keil%
\thanks{Hamzah I. Khan, Jingqi Li, and David Fridovich-Keil are with the Department of Aerospace Engineering and Engineering Mechanics, The University of Texas at Austin, Austin, TX 78712 USA 
(e-mail: hamzah@utexas.edu; jingqi.li@austin.utexas.edu; dfk@utexas.edu). ${}^*$ signifies equal contribution.}%
\thanks{This work was supported by the National Science Foundation under Grant No. 2336840 and by the Army Research Laboratory under Cooperative Agreement No. W911NF-23-2-0011.}
}
\begin{document}

\maketitle

\begin{abstract}
% \todo{Mention how dynamic games allow us to paramterize longer strategic decision making in continuous spaces with discrete parameters.}
% \todo{Update arxiv and citation bibtex for supplement before submitting.}
Effectively interpreting strategic interactions among multiple agents requires us to infer each agent’s objective from limited information.
% \todo{add sentence that explains the context of dynamic games? - repetitive process as in self-driving}
Existing inverse game-theoretic approaches frame this challenge in terms of a ``level-1’’ inference problem, in which we take the perspective of a third-party observer and assume individual agents share complete knowledge of one another’s objectives.
However, this assumption breaks down in decentralized, real-world scenarios like urban driving and bargaining, in which agents may act based on conflicting views of one another’s objectives. 
We demonstrate the necessity of inferring agents’ 
% \hmzh{what does this mean here?}
different estimates of each other’s objectives through empirical examples, and by theoretically characterizing the prediction error of level-1 inference on fictitious gameplay data from linear-quadratic games.
To address this fundamental issue, we propose a framework for level-\underline{2} inference to address the question: ``What does each agent believe about other agents’ objectives?’’
We prove that the level-2 inference problem is non-convex even in benign settings like linear-quadratic games, and we develop an efficient gradient-based approach for identifying local solutions. 
Experiments on a synthetic urban driving example show that our approach uncovers nuanced misalignments that level-1 methods miss.
% \vspace{-8pt}
% \todo{Move appendix stuff to main draft}
\end{abstract}

% \begin{IEEEkeywords}
% Multi-agent interactions, inverse games, dynamic games, incomplete information, theory of mind.
% \end{IEEEkeywords}

\section{Introduction}
\label{sec:intro}
% Autonomous agents operating in interactive environments must
% \david{$\leftarrow$ rm}
% In order to model the behavior of other decision-makers in interactive environments, 
Autonomous agents in interactive settings must develop an understanding of other decision-makers' objectives in order to model their behavior.
% \cite{pynadath2005psychsim}.
In this work, we explore the question
\begin{quote}\emph{
%From a third-party observer's perspective, 
What is each agent's objective \textit{and} what does every agent think are other agents' objectives?}
\end{quote}
% , which necessitates anticipating and understanding how agents make decisions given their underlying preferences and limitations.
%\cite{harbers2009modeling} \david{no reference is needed for such a generic statement. any reference for this is completely vacuous}% -- human or robotic -- to plan safely and comfortably \david{sounds too much like driving. that is a fine example, but i wouldn't open the paper by focusing on that in this venue} \todo{cite}.
% A critical requirement for this anticipation  \david{anticipating what?} is
%We motivate this statement \david{can easily tighten: ``For example...''} using an example from human driving: 
% For example, human drivers on the road do not have exact models for the behavior of every other vehicle but must instead infer them in real-time as they interact \cite{nori2024prevention}.
In urban driving, for example, failure to understand the intentions of another driver can easily lead to unsafe and/or inefficient behavior.
% Failures in properly inferring the possible motivations of other agents can result in crashes or other undesirable behavior.
% Consider, for example, the scenario depicted in \cref{fig:front-figure}(bottom).
% Consider the lane change scenario of 
Consider \cref{fig:front-figure}, in which the blue car wants to change into the lane currently occupied by the red car.
% we observe a red vehicle lane keeping and a blue vehicle changing from its initial lane into the top lane.
Initially, the two vehicles deadlock because both agents incorrectly estimate the other agent's target lane.
% that the other wants to switch lanes.
A third-party observer 
% (like the level-1 observer in \cref{fig:front-figure}) 
seeking to understand the interaction might conclude that both agents wish to stay in their lanes, cf. the ``level-1 observer'' in \cref{fig:front-figure}. %\david{cf. the ``level-1 observer'' in \cref{fig:front-figure}.}
% \hmzh{Is this the best motivation given our results in \cref{fig:lc-inference}?}
% \david{I like this example because I think it can illustrate clearly why the L2 model leads to better predictions than the incorrect L1 model.}
% \todo{Measure prediction cost later?} \hmzh{Won't this make reviewers ask for prediction costs (next sentence)?}
% In a more severe mismatch, if both agents seek to switch lanes but estimates the other's objective to stay in the same lane, it could lead to a crash.
However, that incorrect conclusion would lead to substantial error in predicting the agents' future behavior.
% , \hmzh{e.g., if slow-moving traffic were to appear and force the red car to slow down.
The level-2 observer (\cref{fig:front-figure}), by contrast, infers each agent's objective and their estimate of the other agent's objective, concluding that
% leading to a more coherent understanding:
% explanation of the depicted behavior:
both agents incorrectly estimate the other's target lane, and so they drive overcautiously and deadlock.
% the other wants to 
% that 
% \hmzh{where} 
% the other wants to switch lanes, and so they drive overcautiously and deadlock.
% \todo{does not follow from previous sentence}
% Thus, misunderstandings between agents in these complex interactions can significantly influence outcomes. \david{Suggest rephrasing to: ``
Fundamentally, the level-2 interaction model can explain observed behavior in terms of agents' \textit{misunderstanding} rather than only in terms of \textit{misaligned objectives}.

\begin{figure}[t]
\centering
% \begin{minipage}[t]{0.48\textwidth}
% \vspace{0pt}
\centering
\includegraphics[width=\linewidth]{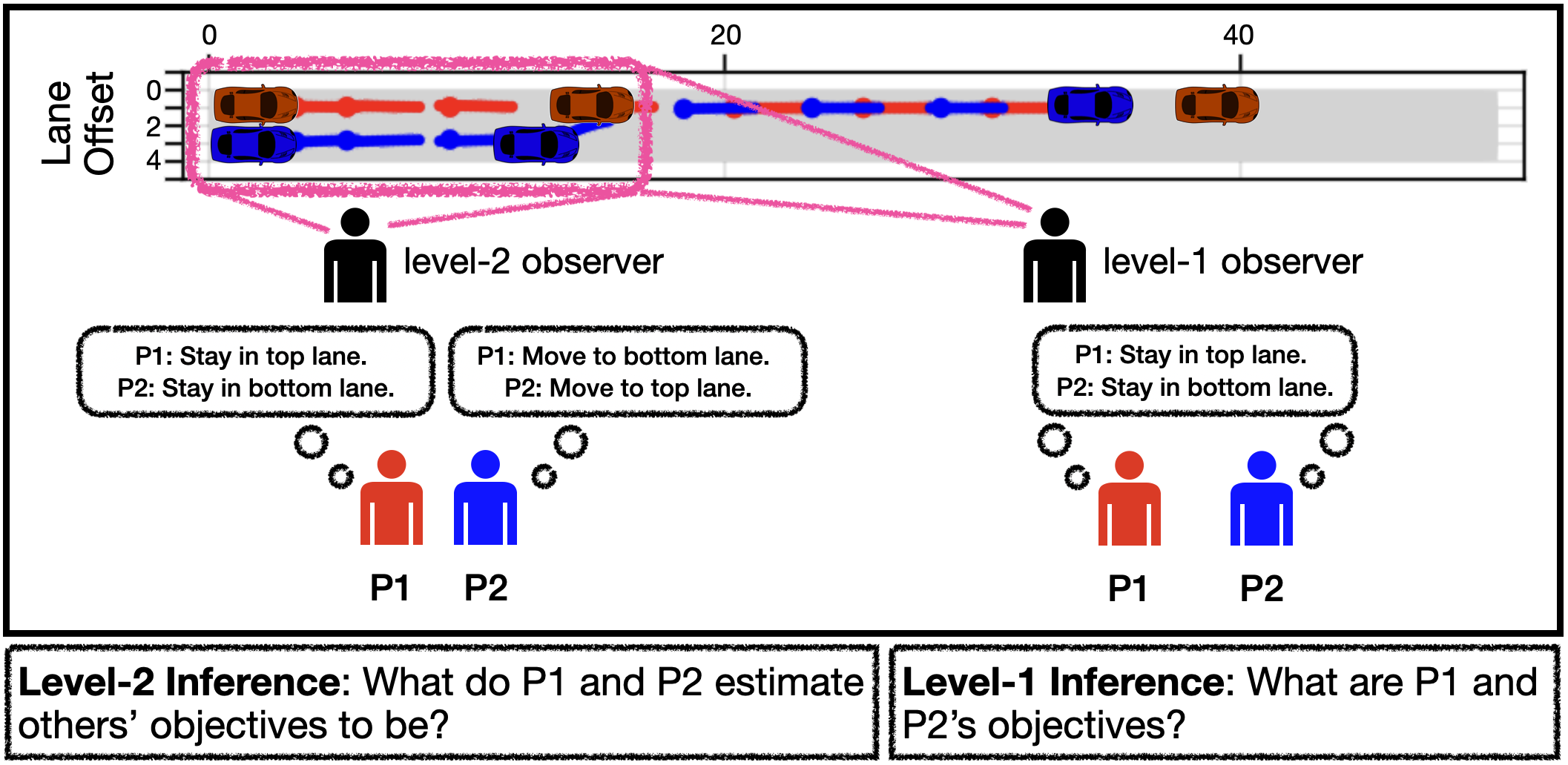}\vspace{-0.5em}
% \end{minipage}\hfill
% \begin{minipage}[t]{0.48\textwidth}
% \vspace{0pt}
\caption{\label{fig:front-figure}
\emph{Level-2 game inference.}
From observations of a multi-agent interaction, the level-2 observer infers each agent's objective and its estimate of others' objectives, capturing mismatched beliefs.
Level-1 methods assume agents know one another's objectives.
% \emph{Schematic of the proposed approach for level-2 game-theoretic model inference.}
% Given observations of a multi-agent interaction, our method infers each agent's objective \emph{and} its estimate of the others' objectives,
% thus accounting for potential mismatch in each agent's understanding of the interaction.
% By contrast, existing ``level-1'' methods assume that each agent knows the others' objectives and lack the flexibility to explain nuanced behavior that results from that mismatch.
}
\vspace{-1em}
% \end{minipage}
\end{figure}

% Thus, these inference methods must consider what agents believe about other agents in order to fully understand interactions.
% Third-party observers like transportation engineers often monitor intersections with cameras \citep{morris2011lane} in order to improve road safety (i.e. if regular mismatches caused by confusing intersection design \cite{shirazi2016looking} lead to crashes).
% The fundamental aim of this work
% an autonomous vehicle or
This work aims to develop a formal mathematical model which can explain these interactions, and an algorithm that a third-party observer---like a city regulator seeking to optimize traffic patterns---can use to identify agents' estimates of one another's objectives.
To this end, we take a game-theoretic perspective and model the behavior of each agent as a rational best response to what it believes others will do.
Given such an interaction model, a third party with access to (possibly noisy, partial) observations of the agents' actions can solve an ``inverse game'' problem to identify parameters of each agent's objective function that best explain the data.
However, existing methods \cite{awasthi2020inverse,le2021lucidgames,peters2023learning,mehr2023maximum,li2023cost,liu2024auto} assume that, although the agents' objectives are unknown to the observer, they \emph{are} fully known to the interacting agents.
% themselves during the interaction.\yang{I am a little bit confused. Do you mean "observed agents know each other’s objectives?"} 
This ``level-1'' assumption is often unrealistic and ignores more nuanced explanations for observed behavior, as shown in \cref{fig:front-figure}. %readily breaks down in practice, 

Instead of assuming that all observed agents share knowledge of their objectives, our work extends the classical level-1 reasoning framework.
We incorporate the level-2 reasoning concept from the theory of mind literature \cite{gmytrasiewicz1995rigorous,harbers2009modeling}.
This allows us to model each agent as acting rationally based on its own objectives and its estimates of other agents' objectives.
Furthermore, we extend existing level-2 inference approaches 
% jin2024mmtom
\cite{karimi2020receding,fotiadis2021recursive,tian2021learning,tian2021anytime,levin2022bridging,ma2022recursive,he2025latent,jin2024mmtom,shi2025muma,jara2019theory} by
% Specifically, we 
generalizing them from discrete action and belief spaces to continuous spaces using a differentiable Mixed Complementarity Problem (MCP) formulation \cite{dirkse1995mcplib}.
% \emph{Specifically, we make three contributions.}
Specifically, we make \emph{three contributions}.
\begin{enumerate}
    \item We formalize a framework, based in theory of mind, for modeling level-2 inverse dynamic games using sets of coupled Nash equilibrium problems.
    We showcase its value for modeling scenarios in which agents have mismatched estimates of other agents' objectives.
    % We showcase the need for inferring agents' objectives and their estimates of other agents' objectives through empirical studies.
    \item We prove that the level-2 inference problem is non-convex and derive prediction error bounds for level-1 inference applied to linear-quadratic (LQ) game data generated by agents operating under a level-2 model.
    % theoretically motivating the need for level-2 inference over level-1 methods.%These results demonstrate that, unlike level-1 inference, level-2 inference effectively captures scenarios where agents have heterogeneous estimates of each other’s objectives.%We prove the non-convexity of the level-2 inference problem and derive prediction error bounds for level-1 inference on linear-quadratic (LQ) game data generated by agents using a level-2 model, which confirms that level-2 inference is capable of modeling agents with heterogeneous estimate of others' objective. %Our level-2 framework provides an overparameterization of level-1 inference, enabling it, by construction, to achieve lower prediction loss, and we theoretically explain this performance gap.%We prove the non-convexity of the level-2 inference problem and derive prediction error bounds for level-1 inference when applied to linear-quadratic (LQ) game data generated by agents operating under a level-2 model.
    
    % We analyze level-2 inference in games with linear dynamics and quadratic objectives, and we derive
    % % linear-quadratic (LQ) games, a special class of games with linear dynamics and quadratic costs, by deriving theoretical 
    % upper and lower bounds on the prediction error of level-1 inference when applied to data generated by agents operating under a level-2 model.
    % under heterogeneous estimates of others' objectives. 
    % Moreover, we prove the non-convexity of the level-2 inference problem.
    % that inferring the parameters of these level-2 models is, in general, a non-convex problem.
    \item We propose an efficient gradient-based algorithm for identifying locally optimal solutions to the level-2 inverse game problem.
    Our method outperforms level-1 inference in empirical studies on LQ games and a synthetic urban driving scenario by identifying agents' misaligned estimates of others' objectives. %Our empirical results on LQ games and a synthetic urban driving scenario demonstrate that our method outperforms level-1 inference, even under non-convex collision avoidance constraints.
    % We propose an efficient gradient-based algorithm for identifying locally-optimal solutions to the level-2 inverse game problem.
    % which allow for the inference of each player's estimates of other agents' objectives from behavior data. 
    % We demonstrate 
    % Our method outperforms level-1 inference on LQ games and a synthetic urban driving scenario by identifying agents' misaligned estimates of others' objectives, under non-convex constraints.
    % level-2 inference technique 
    % .
    % with nonlinear dynamics, 
    % showing that we can identify agents' misaligned estimates of others' objectives and .
    % Our framework to define the inverse level-2 game problem, 
    % \item and derive theoretical results connecting the equilibrium solutions of our model to those of the standard ``level-1'' model used in prior work. 
    % \item We propose an efficient, gradient-based algorithm that infers each agent’s objective and its estimates of other agents’ objectives that best explain the observed behavior.
\end{enumerate}
\section{Preliminaries}
\label{sec:problem-formulation}

We consider a discrete-time dynamic game with $\numplayers$ agents interacting over a finite time horizon of length $\horizon$. For any positive integer $d$, define $[d] \equiv {1, 2, \ldots, d}$. Let $t \in [\horizon]$ index the discrete time steps. 

At each step $t$, agent $\playeridx$ has a state $\xstate{t}{\playeridx} \in \mathbb{R}^{\numstates{\playeridx}}$. The joint state vector is defined as $\xstate{t}{} \equiv \{ \xstate{t}{1}, \xstate{t}{2}, \ldots, \xstate{t}{\numplayers} \}$, whose total dimension is $\numstates{} = \sum_{\playeridx=1}^{\numplayers{}} \numstates{\playeridx}$.
We write the states of agent $\playeridx$ collected across the horizon $\horizon$ as $\StateTraj{\playeridx} \equiv \{\xstate{1}{\playeridx}, \xstate{2}{\playeridx}, \cdots, \xstate{\horizon}{\playeridx}\}$, and similarly define the game state trajectory (including all agents' states) over the horizon as $\StateTraj{} \equiv \{\xstate{1}{}, \xstate{2}{}, \cdots, \xstate{\horizon}{}\}$.
We denote $\xstate{t}{-i} \equiv \xstate{t}{} \setminus \{\xstate{t}{i}\}$ and $\StateTraj{-i} \equiv \StateTraj{} \setminus \{\StateTraj{i}\}$.
\emph{Throughout the work, we denote quantities which are aggregated over time with boldface (i.e. $\StateTraj{\playeridx}$).}
Each agent $i$'s state $\xstate{t}{i}$ evolves over time according to its dynamics $\xstate{t+1}{i} = \dynamics{i}(\xstate{t}{i}, \ctrl{t}{i})$,
% \begin{equation}
% \label{eq:dynamics}
%     \xstate{t+1}{i} = \dynamics{i}(\xstate{t}{i}, \ctrl{t}{i}),
% \end{equation}
where $\ctrl{t}{\playeridx} \in \mathbb{R}^{\numctrls{\playeridx}}$ denotes agent $\playeridx$'s actions at time $t$ and $\ctrl{t}{}$, $\CtrlTraj{\playeridx}$\!, and $\CtrlTraj{}$ are defined analogously to $\xstate{t}{}$, $\StateTraj{\playeridx}$\!, and $\StateTraj{}$.

\subsubsection{Nash Equilibria in Dynamic Games}
Let each agent $\playeridx$ have parameters $\gtparams{\playeridx} \in \paramspace{\playeridx} \subseteq \mathbb{R}^{\numparams{\playeridx}}$; we then denote the set of all agents' parameters by
\begin{equation}
\label{eq:param-space}
\GtParams{} \equiv \{ \gtparams{1}, \gtparams{2}, \cdots, \gtparams{\numplayers{}} \} \in \jointparamspace{} \equiv \paramspace{1} \times \paramspace{2} \times \cdots \times \paramspace{\numplayers{}}.
\end{equation}
Each agent $\playeridx{}$ minimizes a total objective $\costfn{\playeridx}(\StateTraj{}, \CtrlTraj{}; \gtparams{i}) \equiv \sum_{t=1}^T \stagecostfn{t}{i}(\xstate{t}{}, \ctrl{t}{}; \gtparams{\playeridx})$ that is defined in terms of stagewise objective functions $\stagecostfn{t}{i}(\xstate{t}{}, \ctrl{t}{}; \gtparams{\playeridx})$.
% , and the total cost is defined as $\costfn{\playeridx}(\StateTraj{}, \CtrlTraj{}; \gtparams{i}) \equiv \sum_{t=1}^T \stagecostfn{t}{i}(\xstate{t}{}, \ctrl{t}{}; \gtparams{\playeridx})$.
% \todo{player i's objective/stage cost only depends on its own set of parameters $\gtparams{i}/\gtparams{i,i}$. Double check everywhere to make it consistent.}
% \begin{equation}
%     \label{eq:objective}
%     \costfn{\playeridx}(\StateTraj{}, \CtrlTraj{}; \gtparams{i}) \equiv \sum_{t=1}^T \stagecostfn{t}{i}(\xstate{t}{}, \ctrl{t}{}; \gtparams{\playeridx})
% \end{equation}
This optimization is subject to the inequality constraint $\inequalityconstraint{\playeridx}(\StateTraj{}, \CtrlTraj{}) \ge 0$ and the equality constraint $\equalityconstraint{\playeridx{}}( \StateTraj{},\CtrlTraj{} ) = 0$ (which contains the agent's initial state constraint $\xstate{1}{i} = \initstate{i}$ and its dynamics constraints $\xstate{t+1}{i} = \dynamics{i}(\xstate{t}{i}, \ctrl{t}{i})$).
% \todo{Go through document and find instances of $\Gamma$.}
% , and dynamics $\dynamics{i}$.
% Given an initial state $\initstate{}$ and a horizon $T$, we denote the resulting $\numplayers{}$-agent parameterized Nash game as 
% \begin{equation}
% \label{eq:nash-game}
% \Gamma(\GtParams{}, \initstate{}, \horizon)
%     \equiv (
%     \numplayers{},  
%     \{\costfn{i}, 
%     \dynamics{i},\equalityconstraint{i}, 
%     \inequalityconstraint{i}\}_{i=1}^N 
%     ),
% \end{equation}
% and we simplify this notation to $\Gamma(\GtParams{})$ whenever the context is clear. 
We aim to optimize for $(\StateTraj{*}, \CtrlTraj{*})$ by jointly solving
\begin{equation}
    \label{eq:joint-optimization}
    \forall \playeridx \in  [\numplayers{}]\left\{
    \begin{aligned}
        \min_{\StateTraj{}, \CtrlTraj{\playeridx}} \ & \costfn{\playeridx}(\StateTraj{}, \CtrlTraj{}; \gtparams{\playeridx}) \\
        \text{s.t. } 0 &= h^{\playeridx}( \StateTraj{}, \CtrlTraj{}), 0 \leq g^{\playeridx}(\StateTraj{}, \CtrlTraj{}).
        % \forall t \in [\horizon-1]
        % , \xstate{1}{i} = \initstate{i}\\
        % & 
        % \xstate{t+1}{i} = \dynamics{i}(\xstate{t}{i}, \ctrl{t}{i}), \forall t \in [\horizon-1 ] .
    \end{aligned}
    % \vspace{-1em}
    \right.  
\end{equation}
If a solution $(\StateTraj{*}, \CtrlTraj{*})$ cannot be locally perturbed to unilaterally improve an agent's objective, i.e., the condition
\begin{equation*}
    \costfn{i}( \StateTraj{}, \CtrlTraj{\playeridx}, \CtrlTraj{(-\playeridx)*}; \gtparams{\playeridx} ) \geq \costfn{i}( \StateTraj{*}, \CtrlTraj{\playeridx*}, \CtrlTraj{(-\playeridx)*}; \gtparams{\playeridx} ), ~~ \forall i \in [N],
\end{equation*}
holds within a neighborhood around $\CtrlTraj{\playeridx*}$  (and the resulting state trajectory $\StateTraj{*}$) 
% \david{need the asterisks!!!!}
subject to the constraints in \cref{eq:joint-optimization},
% of $\Gamma(\GtParams{})$ \david{why not: ``constraints in (5)''?}, 
then we call $(\StateTraj{*}, \CtrlTraj{*})$ a \emph{local generalized Nash equilibrium} (LGNE) \cite{bacsar1998dynamic} 
% with components the LGNE state trajectory ($\StateTraj{*}$) and the LGNE 
 % strategy ($\CtrlTraj{*}$).
% . Moreover, we 
and refer to $\CtrlTraj{*}$ as the LGNE strategy and to $\StateTraj{*}$ as the LGNE state trajectory of the Nash game $\Gamma(\GtParams{})$.
% is referred to 
% In particular

\subsubsection{Inverse Dynamic Games}
\label{sssec:inverse-games}
We first introduce a well-known, related problem: inverse dynamic games \cite{molloy2017inverse,peters2021inferring,awasthi2020inverse}.
In an inverse dynamic game, an observer of the game (which may be an agent in it) receives (possibly noisy, partial, or missing) observations $\obs{t}{\playeridx} \in \mathbb{R}^{\numobs{\playeridx}} \cup \{\emptyset\}$ at each time $t$ for each agent $\playeridx$.
% \jingqi{Jingqi: we need to introduce $G$ the observation model.}
Analogous to states and actions in dynamic games, we write $\obs{t}{}$ for observations across all agents at time $t$, $\ObsTraj{\playeridx}$ for those of agent $\playeridx$ over time, and $\ObsTraj{}$ for all agents over all times. 
% \jingqi{We assume agents’ trajectory observations are generated under isotropic, zero-mean Gaussian noise, though this assumption can be relaxed to other noise models.}
% We denote collections of observations across all agents at a time, all times for an agent, and all times for all agents, i.e., $\obs{t}{}, \ObsTraj{\playeridx}, \ObsTraj{}$ as done for states and actions in dynamic games.
The aim of an inverse game is to estimate the true parameters $\GtParams{}$ of the game given these observations $\ObsTraj{}$. 
% \hmzh{We denote parameter estimates with hats, $\hat{\Box}$, e.g. $\Params{}$ is an estimate of $\GtParams{}$.
% Specifically, denoting the estimate} 
% \david{tighter: ``Denoting these estimates as''}
Denoting these estimates as
$\Params{} = \{ \params{1}, \params{2}, \cdots, \params{\numplayers{}} \}$, where $\params{\playeridx{}} \in \paramspace{\playeridx{}}$, the methods solve the maximum likelihood estimation (MLE) problem
\begin{subequations}
    \label{eq:level-1-inverse-game}
    \begin{align}
    \max_{\StateTraj{}, \CtrlTraj{}, \Params{}} ~ &
     p(\ObsTraj{} | \StateTraj{}, \CtrlTraj{}) \label{eq:level-1-inverse-game-objective} \\
     \text{s.t.} & ~~ (\StateTraj{}, \CtrlTraj{}) \in \text{LGNE of } \Gamma(\Params{}). \label{eq:level-1-inverse-nash-constraint}
     \end{align}
\end{subequations}
In this work, we assume the likelihood function $ p(\ObsTraj{} | \StateTraj{}, \CtrlTraj{}) $ is Gaussian with mean $ G(\StateTraj{}, \CtrlTraj{}) $ and isotropic diagonal covariance, so that its log-likelihood reduces to $\log (p(\ObsTraj{} | \StateTraj{}, \CtrlTraj{})) = -\|\ObsTraj{} - G(\StateTraj{}, \CtrlTraj{})\|_2^2.$
% \david{In this work, we assume the model $p(\ObsTraj{} | \StateTraj{}, \CtrlTraj{})$ is Gaussian with mean being an observation function $G(\StateTraj{}, \CtrlTraj{})$ and covariance be a diagonal matrix. Under this assumption, taking a negative log over the likelihood function, we have $-\log(p(\ObsTraj{} | \StateTraj{}, \CtrlTraj{}))$ can be rewritten as a quadratic objective $\|\ObsTraj{} - G(\StateTraj{}, \CtrlTraj{})\|_2$.}
% \jingqi{Jingqi: What about the partial observation term? Jingqi will edit this as per David's comment.}

Finally, please refer to Supplement D \cite{khan2025agentsthinkdolevel2} for a summary of all symbols introduced throughout the paper.
\newcommand\SubCtrlTraj[1]{\textbf{u}^{#1}}
\newcommand\subctrl[2]{\textbf{u}^{#2}_{#1}}

\newcommand{\iterindex}{r}
\newcommand{\obswindow}[1]{\mathbf{O}_{#1}}

\section{Formulating Level-2 Inverse Games}
\label{sec:approach}
% \todo{Do one read through where we confirm that usage of the words estimate and inference parameter are consistent with one another in the context they are used.}
% \subsection{Level-2 Inverse Games}
% In this section, we extend the classical (level-1) inverse game formulation to level-2 settings. Specifically, we are tasked to infer each agent $i$'s objective and also its estimate of others' objectives from historical behavior data, adopting the viewpoint of a third-party observer, a standard assumption in the inverse games literature. 
%\david{can trim prev sentence an combine here} 
% In this section, we extend the classical (level-1) inverse game formulation to the level-2 setting. Specifically, from the standard viewpoint of a third-party observer, we aim to infer each agent $i$'s objective as well as its estimates of other agents' objectives using historical behavior data $\obs{}{}$. 
% By doing so, we seek to answer the question:
% \begin{quote}\emph{
% %From a third-party observer's perspective, 
% What is each agent's objective \textit{and} what does every agent think are other agents' objectives? 
% %believe about % its %own and every other agent's 
% %given interaction observations?}
% }
% \end{quote}

% \todo{\textbf{Running Example}: Re-introduce the example from Fig. 1.}

% \todo{Clarify about the role of the third-party observer.}
% Before presenting the detailed formulation, w
We first define a level-2 game as one in which each agent $\playeridx$ knows its own true objective parameter $\gtparams{\playeridx, \playeridx}$ and maintains estimates of the other agents' objective parameters, denoted by  
% \david{in the experiments you don't have the comma in the superscript btw. i also feel like we don't really need to define $\theta^{i, -i}$ to have its own symbol} 
$\gtparams{i,-i}\equiv \{\gtparams{i,1}, \cdots, \gtparams{i,i-1}, \gtparams{i,i+1}, \cdots, \gtparams{i,N}\}$. Agent $i$'s parameters in a level-2 game are collectively represented by 
% $\secondgtparams{i} \equiv \{\gtparams{i}, \gtparams{i,-i}\}\in \jointparamspace{}$. 
\begin{equation}
    \secondgtparams{i} \equiv \{\gtparams{\playeridx, \playeridx}, \gtparams{i,-i}\}\in \jointparamspace{}.
\end{equation}
% \hmzh{Confusing notation below.}
Each agent $i$ independently computes a hypothesized LGNE of the game $\Gamma(\secondgtparams{i})$.
% parameterized by $\secondgtparams{i}$. 
We denote this hypothesized LGNE as $(\hypothesizedStateTraj{\playeridx}, \hypothesizedCtrlTraj{\playeridx})$, where
\begin{equation}
\label{eq:hypothesized-trajectories-for-i}
\hypothesizedStateTraj{\playeridx} \equiv\{\hypStateTraj{i,j}\}_{j=1}^{\numplayers{}} \text{ and } \hypothesizedCtrlTraj{\playeridx} \equiv \{ \hypCtrlTraj{i,j}  \}_{j=1}^{\numplayers{}}.
\end{equation} 
% \hmzh{though we may equivalently denote the LGNE as $(\hypothesizedStateTraj{}, \hypothesizedCtrlTraj{})$ for brevity when player $\playeridx$ is clear from context.}
Each player $\secondidx{}$'s \emph{hypothesized LGNE state trajectory and strategy under $\secondgtparams{i}$ are given by $\hypStateTraj{\playeridx{},\secondidx{}}$ and $\hypCtrlTraj{\playeridx{},\secondidx{}}$.}
% \david{this notation is not sufficiently clear. need to explain that $u^{i,j}$ is Pj's control sequence in Pi's hypothesized game} represent the \emph{hypothesized LGNE state trajectory and strategy} under $\secondgtparams{i}$. 
Subsequently, each agent $i$ selects its own actions $\hypCtrlTraj{i,i}\equiv\{\hypctrl{t}{i,i}\}_{t=1}^T$ from the hypothesized LGNE strategy $\hypothesizedCtrlTraj{\playeridx}$, and interacts with others accordingly. 

% \todo{\textbf{Running Example}: Describe what these are in the lane change and what game each player solves.}

In the corresponding inverse problem, the \emph{level-2 inverse game}, a third party observer infers the parameters $\secondgtparams{i}$ for each agent $i\in[N]$. To this end, for each agent $i$, we denote an estimate of $\secondgtparams{i}$ as
$\GtParamsguess{\playeridx{}} \equiv \{ \params{\playeridx{}, \playeridx{}}, \params{\playeridx{},-\playeridx{}}  \} \in \jointparamspace{}$.

%\david{didn't we already do this above?} \david{I think we need to rearrange so that (15) comes before the MCP bit, then we talk about MCP transcription, and we only talk about the parameters once}:
% \begin{equation}
%     \label{eq:level-2-sub-params}
%     \GtParamsguess{\playeridx{}} %= \params{\playeridx{}, [\numplayers{}]} 
%     \equiv \{ \params{\playeridx{}, \playeridx{}}, \params{\playeridx{},-\playeridx{}}  \} \in \jointparamspace{}.
% \end{equation}

% \begin{remark}
% While player $\playeridx{}$ is aware of his own parameter, $\gtparams{\playeridx{}, \playeridx{}}$, the third-party observer is not necessarily aware of it, necessitating the estimation of $\params{\playeridx{}, \playeridx{}}$ as in \cref{eq:level-2-sub-params}. \end{remark}

% To make the formulation tractable, inspired by the well-known fictitious-play perspective for game-theoretic decision-makings, 
% Combining one set of such parameters for each of the $\numplayers{}$ agents then specifies $\numplayers{}$ hypothesized Nash games, 
% $\Gamma(\GtParamsguess{1}), \Gamma(\GtParamsguess{2}), \cdots,  \Gamma(\GtParamsguess{N})$,
% one for each agent $\playeridx{} \in [\numplayers{}]$.
In an $N$-agent level-2 inverse game, the parameters to infer can be compactly represented as 
% \begin{equation}
% \label{eq:level-2-params}
$\GtParamsguess{} \equiv \left\{ \GtParamsguess{1}, \GtParamsguess{2}, \cdots, \GtParamsguess{N} \right\} \in \jointparamspace{} \times \jointparamspace{} \times\cdots \times \jointparamspace{}$,
% \end{equation}
and the goal is to estimate these parameters to maximize the likelihood of the observed trajectory $\ObsTraj{}$. The level-2 inverse game is thus formulated as:
\begin{subequations}
    \label{eq:level-2-inverse-game}
    \begin{align}
    \max_{\StateTraj{}, \CtrlTraj{}, \GtParamsguess{}} \ \  &
     p(\ObsTraj{} | \StateTraj{}, \CtrlTraj{} ) \label{eq:level-2-inverse-game-objective} \\
    \text{s.t. }   
    & ( \hypothesizedStateTraj{\playeridx}, \hypothesizedCtrlTraj{\playeridx} ) \in \text{LGNE of } \Gamma(\GtParamsguess{i}),  && \hspace{-6pt} \forall i \in [\numplayers{}], \label{eq:level-2-inverse-game-nash-constraint} \\
    %\xstate{1}{i}& =\initstate{i},
    & \CtrlTraj{} = \{ \hypCtrlTraj{1,1}, \hypCtrlTraj{2,2}, \cdots, \hypCtrlTraj{\numplayers{}, \numplayers{}} \}, \label{eq:level-2-inverse-game-ctrls-set} \\
    %& \xstate{t+1}{i}  = \dynamics{i}(\xstate{t}{i}, \subctrl{t}{i,i}), \forall t \in [\horizon{}-1], && \hspace{-6pt} \forall i\in[N], 
    & \StateTraj{} = \{ \hypStateTraj{1,1}, \hypStateTraj{2,2}, \cdots, \hypStateTraj{N,N} \}\label{eq:level-2-inverse-game-dynamics},
    \end{align}
\end{subequations}
% \jingqi{Jingqi: Should we hat the agent trajectories?}
% \todo{This is confusingly written.}
%\david{why not just replace the last constraint with a version of (10c) but for $x$. the current one is also imprecise bc the time indexing should be to [T-1] and there should be an initial state constraint. both these issues would be addressed with my suggested change]}
where the second constraint, \cref{eq:level-2-inverse-game-ctrls-set}, indicates that each agent~$i$ independently plans its actions~$\hypCtrlTraj{i,i}$ by extracting them from its hypothesized LGNE~$(\hypothesizedStateTraj{\playeridx}, \hypothesizedCtrlTraj{\playeridx})$. The joint action trajectory~$\CtrlTraj{}$ is then formed by aggregating these individual trajectories~$\{\hypctrl{t}{i,i}\}_{i=1}^N$. A similar interpretation holds for the constraint in \cref{eq:level-2-inverse-game-dynamics}. 

To analyze how the likelihood function varies with different hypothesized $\GtParamsguess{}$, we define a loss function purely in terms of $\GtParamsguess{}$, which we minimize to obtain the optimal level-2 inverse game solution:
 %$\loss( )$
\begin{equation}\label{eq:level-2 log-likelihood-inverse-objective}
\begin{aligned}
    \loss( \GtParamsguess{})\equiv -\max_{\StateTraj{}, \CtrlTraj{}}\  & p(\ObsTraj{} | \StateTraj{}, \CtrlTraj{}) \\
    \textrm{s.t. }& \eqref{eq:level-2-inverse-game-nash-constraint}, \eqref{eq:level-2-inverse-game-ctrls-set}, \eqref{eq:level-2-inverse-game-dynamics}
\end{aligned}
\end{equation}

We denote ground truth parameters for level-1 inference as $\gtparams{*}$ and for level-2 inference as $\secondgtparams{*}$.
% When there is no observation noise, the ground truth parameters minimize the inverse game loss.
% For convenience, we summarize the level-1 and level-2 parameters we have introduced in \cref{tab:parameters}.

% \todo{\textbf{Running Example}: Is anything more needed for the previous two paragraphs?}

% \input{figs/parameters_table}

% \todo{Move to a different section?}
\begin{remark}
\label{rmk:gradients}
We can locally improve our estimate $\Params^{\playeridx, \secondidx}$ using observations $\obs{}{}$ only if the gradient of the loss function $\cref{eq:level-2 log-likelihood-inverse-objective}$ with respect to $\Params^{\playeridx, \secondidx}$ is nonzero.
This gradient can become zero when observations are missing or if a small perturbation in $\Params^{\playeridx, \secondidx}$ does not change the likelihood of the observed trajectory $\ObsTraj{}$ (e.g., when an agent's equilibrium behavior is insensitive to its understanding of another agent's objective parameters).
% the Nash equilibrium does not depend on the parameter being inferred).}
% \hmzh{Inferring $\Params{\playeridx}$ using a gradient-based approach requires some level of coupling in the objectives or constraints of interacting agents.
% Let us separate the stage costs and constraints into equivalent individual and coupled terms, i.e.
% \begin{equation}
%     \stagecostfn{t}{\playeridx}(\xstate{t}{}, \ctrl{t}{}; \gtparams{\playeridx}) = \stagecostfn{t}{\playeridx{}, \text{indiv}}(\xstate{t}{\playeridx}, \ctrl{t}{\playeridx}; \gtparams{\playeridx}) + \stagecostfn{t}{\playeridx, \text{coupled}}(\xstate{t}{}, \ctrl{t}{}; \gtparams{\playeridx})
% \end{equation}
% \begin{equation}
%     \equalityconstraint{\playeridx}(\xstate{t}{}, \ctrl{t}{}) = \equalityconstraint{\playeridx{}, \text{indiv}}(\xstate{t}{\playeridx}, \ctrl{t}{\playeridx}) + \equalityconstraint{\playeridx, \text{coupled}}(\xstate{t}{}, \ctrl{t}{}) = 0
% \end{equation}
% \begin{equation}
%     \inequalityconstraint{\playeridx}(\xstate{t}{}, \ctrl{t}{}) = \inequalityconstraint{\playeridx{}, \text{indiv}}(\xstate{t}{\playeridx}, \ctrl{t}{\playeridx}) + \inequalityconstraint{\playeridx, \text{coupled}}(\xstate{t}{}, \ctrl{t}{}) \le 0
% \end{equation}
% }
\end{remark}

\section{Theoretical Characterization of Level-2 Inverse Games}
In this section, we focus on linear-quadratic (LQ) games, a class of dynamic games with linear dynamics and quadratic objectives that allow for analytical tractability. We show that level-2 inference is inherently non-convex, and
% . However, 
we demonstrate its necessity by deriving upper and lower bounds on the minimal prediction error of level-1 inference when the data involves agents with misaligned estimates of others’ objectives. % In this subsection, we theoretically characterize the limitations of existing level-1 inference methods when applied to datasets collected from agents with misaligned beliefs. We achieve this by deriving upper and lower bounds on the best achievable prediction error in LQ games.
All proofs are provided in the Appendix, and a reference for notation is provided in Supplement D
\cite{khan2025agentsthinkdolevel2}.
% \cite{khan2025agentsthinkdolevel2}.

We analyze level-2 inference in an $N$-agent finite-horizon linear–quadratic (LQ) dynamic game with quadratic costs and linear dynamics played over horizon $\horizon$.
Specifically, the $i$-th agent considers the quadratic stage cost
\begin{equation}
    \label{eq:lq-cost}
    \stagecostfn{t}{i}(\xstate{t}{}, \ctrl{t}{} ;\gtparams{i}) = \frac{1}{2} {\xstate{t}{}}^\top Q({\gtparams{i}}) \xstate{t}{} + \frac{1}{2} {\ctrl{t}{i}}^\top R^i \ctrl{t}{i}
\end{equation}
where the entries of the positive semidefinite matrix $Q({\gtparams{i}})$ are parameterized by the vector $\gtparams{i}$, and $R_t^i$ is a fixed positive definite matrix. 
%We define the total cost as $\costfn{\playeridx}(\StateTraj{}, \CtrlTraj{}; \gtparams{\playeridx})  = \sum_{t=1}^T \ell_t^i(x,u;\gtparams{i})$. 
Each agent $\playeridx$'s state dynamics are given by %, its dynamics $\dynamics{i}$ is
\begin{equation}
    \label{eq:lq-dyn}
    \xstate{t+1}{i} = \LinearKKTA{}^{\playeridx} \xstate{t}{i} + \LinearKKTB{}^{\playeridx} \ctrl{t}{i}. 
\end{equation}
To formulate the Lagrangian for each agent $i$, we associate the initial condition constraint $0=\xstate{1}{i}-\initstate{i}$ with a Lagrange multiplier $\dynamicsInitLagrange{i}$, and at each step $t$, we associate a Lagrange multiplier $ \dynamicsLagrange{t}{i}$ with dynamics constraint \cref{eq:lq-dyn}.
For clarity, we denote $\DynamicsInitLagrange{}\equiv\{\dynamicsInitLagrange{i}\}_{i=1}^N$, $\dynamicsLagrange{t}{}\equiv\{\dynamicsLagrange{t}{i}\}_{i=1}^N$, and $\DynamicsLagrange{}\equiv\{\dynamicsLagrange{t}{}\}_{t=1}^{T-1}$.
The Lagrangian specification, matrix definitions, and equilibrium conditions for the LQ game are provided in Appendix~A~\cite{khan2025agentsthinkdolevel2}.
Let $\LQtotalvariable{}$ encode the collection of all primal and dual variables, and $\numLQtotalvariable{}=T\cdot (\numstates{} + \numctrls{}) + (T-1)\cdot\numstates{} + \numstates{}$ be their total dimension. % of all the primal and dual variables.
% we considered in the KKT conditions. 
For convenient analysis, we assume throughout this section that there is a unique LGNE, meaning that $M(\GtParams{})$ is invertible. 
In practice, such an assumption is not required for the implementation of our method (as demonstrated by the lane change example in which multiple Nash equilibria can exist).
% \hmzh{Why do we assume these things?}
Precise conditions ensuring the uniqueness of the solution and invertibility of $M(\GtParams{})$ can be found in the text from \cite[\S6.2]{bacsar1998dynamic}.

To compute agent $i$'s LGNE control under its hypothesized game parameter $ \GtParamsguess{i} $, we have 
\begin{equation}\label{eq: LQ kkt equation}
    M( \GtParamsguess{i} ) \cdot \secondLQtotalvariable^i + S \cdot \initstate{} = 0
\end{equation}
where $\secondLQtotalvariable^i\in\mathbb{R}^{\numLQtotalvariable{}}$. Solving \eqref{eq: LQ kkt equation} yields $\secondLQtotalvariable^i(\GtParamsguess{i}) = -M(\GtParamsguess{i})^{-1} S \initstate{}$. Thus, agent $i$'s control $\LQcontrolfunction$, a subvector of $\secondLQtotalvariable^i$, can be explicitly expressed as the function $\LQcontrolfunction(\GtParamsguess{i})$.

% Similarly, we compute the LGNE solution $\LQtotalvariable'$ under level-1 inference as 
% \begin{equation}
%     M( \LevelOneGtParamsguess ) \LQtotalvariable' + S \initstate{} = 0
% \end{equation}
% In this subsection, we assume that 
Assuming that we have full observation of the entire action trajectory subject to zero-mean Gaussian noise, i.e., $\obs{t}{i} ~\sim \mathcal{N}(\ctrl{t}{i}, c^2 I)$ with variance $c^2 \in \mathbb{R}^+$, we can rewrite $\loss(\GtParamsguess{})$ as
\begin{equation}
    \label{eq:lq-inverse-game-loss}
    \loss(\GtParamsguess{}) =  \sum_{i=1}^N \sum_{t=1}^T \frac{1}{2} \| \LQcontrolfunction_t(\GtParamsguess{i}) - \obs{t}{i} \|_2^2.
\end{equation}
% We show in the following proposition that level-2 inference is a non-convex problem, even in this benign setting.
The following proposition shows level-2 inference is non-convex even in this benign setting.
% \hmzh{Please refer to 
% \david{Appendix A?}
% \cref{tab:notation_homogeneous,tab:notation_mcp}
% in the supplementary material \todo{\cite{khan2025agentsthinkdolevel2}} for a summary of all symbols used in this section.}
% \david{Reword as per earlier langauge.}\hmzh{For reader convenience, a reference for all notation is present in \cite{khan2025agentsthinkdolevel2}.}
\begin{proposition}\label{prop:nonconvex main text}
    Let $\{\secondgtparams{i*}\}_{i=1}^N$ be the ground truth parameters of a level-2 inverse game. Suppose that, in each agent $i$'s hypothesized game $\Gamma(\secondgtparams{i*})$, all agents' cost functions are strongly convex in $(\StateTraj{}, \CtrlTraj{})$. Define the observed data $\ObsTraj{}$ as the union of action trajectories $\hypCtrlTraj{i,i}$ extracted from the LGNE of each game $\Gamma(\secondgtparams{i*})$, as in \cref{eq:level-2-inverse-game-nash-constraint}. Then, $\loss(\GtParamsguess{})$ is a non-convex function.
    %Consider a set of ground truth parameters $\{\secondgtparams{i}\}_{i=1}^N$ for all agents in a level-2 game, suppose that the cost function $\costfn{i}(\StateTraj{}, \CtrlTraj{}; \gtparams{i})$ are strongly convex for $\StateTraj{}$ and $\CtrlTraj{}$, then the loss function of level-2 inference is nonconvex. 
\end{proposition}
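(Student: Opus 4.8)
The plan is to exhibit a concrete, minimal counterexample in which $\loss(\GtParamsguess{})$ fails to be convex, by evaluating it at two points whose midpoint gives a strictly larger loss. The natural choice of the two points is the ground-truth parameter $\GtParamsguess{} = \{\secondgtparams{i*}\}_{i=1}^N$ itself, at which $\loss = 0$ by construction (the observed data is exactly generated from the LGNEs of the games $\Gamma(\secondgtparams{i*})$), and a second point $\GtParamsguess{}'$ that is also a global minimizer, i.e. also achieves $\loss = 0$. If I can produce a distinct $\GtParamsguess{}' \neq \{\secondgtparams{i*}\}_{i=1}^N$ with $\loss(\GtParamsguess{}') = 0$ but $\loss\bigl(\tfrac12(\GtParamsguess{} + \GtParamsguess{}')\bigr) > 0$, non-convexity follows immediately, since a convex function cannot have two global minimizers with a strictly higher value strictly between them.

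The key steps, in order, are as follows. First, I would invoke the explicit representation from \eqref{eq: LQ kkt equation}: agent $i$'s control is $\LQcontrolfunction(\GtParamsguess{i}) = $ a subvector of $-M(\GtParamsguess{i})^{-1} S \initstate{}$, so the loss \eqref{eq:lq-inverse-game-loss} is zero precisely when, for every $i$, the subvector of $-M(\GtParamsguess{i})^{-1} S \initstate{}$ corresponding to $\CtrlTraj{i,i}$ equals $\ctrl{}{i,i}$ (the observed data). Second, I would argue that the map $\GtParamsguess{i} \mapsto \LQcontrolfunction(\GtParamsguess{i})$ is generically non-injective: only the $i$-th block of controls enters the loss, and the entries of $\bar Q$ coming from the off-diagonal parameters $\params{i,j}$ ($j \neq i$) influence agent $i$'s own equilibrium control only through their effect on the coupled equilibrium, so there is slack — one can perturb $\params{i,-i}$ (and, correspondingly, a component of $\params{i,i}$ if needed) while keeping agent $i$'s realized control fixed. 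I would make this precise either by a dimension count (the parameter space $\jointparamspace{}$ has dimension $\sum_j \numparams{j}$ per agent, while the constraint ``$\LQcontrolfunction(\GtParamsguess{i}) = \ctrl{}{i,i}$'' imposes at most $T \numctrls{i}$ scalar equations, which for a suitably chosen LQ instance is strictly fewer), or, more cleanly, by just picking a small explicit scalar-parameter instance (e.g. $N=2$, $T=1$ or $T=2$, one-dimensional state and control, with $Q(\gtparams{i})$ depending affinely on a scalar $\gtparams{i}$) and solving the $2\times 2$ or $3\times 3$ KKT system in closed form to display two parameter values giving the same agent-$i$ control. Third, having two zero-loss points $\GtParamsguess{}$ and $\GtParamsguess{}'$, I would evaluate $\loss$ at the midpoint and check it is strictly positive: because $\LQcontrolfunction(\cdot)$ is a non-affine (rational) function of the parameters — $M(\GtParamsguess{i})^{-1}$ depends nonlinearly on $\GtParamsguess{i}$ — the midpoint of two parameter vectors does not map to the midpoint of the controls, and in particular does not reproduce the data; a direct substitution in the explicit small instance confirms $\loss(\text{midpoint}) > 0$. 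Finally, I would note that the strong-convexity hypothesis on the hypothesized games is exactly what guarantees each $M(\secondgtparams{i*})$ (and $M$ at the nearby constructed points) is invertible, so the closed-form $\secondLQtotalvariable^i(\GtParamsguess{i}) = -M(\GtParamsguess{i})^{-1} S \initstate{}$ is valid and $\loss$ is well-defined and smooth along the segment.

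The main obstacle I anticipate is step two: cleanly establishing that a second zero-loss parameter point exists without it collapsing back to the ground truth. The subtlety is that the loss only sees $\CtrlTraj{i,i}$, not the full hypothesized trajectory $\mathbf{U}^i$, so I have genuine freedom in $\params{i,-i}$; but I must ensure the perturbed point still (a) lies in the feasible parameter space $\jointparamspace{}$ with $Q$ positive semidefinite and the games strongly convex, and (b) is actually distinct from $\secondgtparams{i*}$. Both are handled by working in an open neighborhood of $\secondgtparams{i*}$, where positive definiteness and invertibility are preserved, and by the dimension-count / explicit-instance argument showing the zero level set of $\loss$ is positive-dimensional near the ground truth. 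A convex function whose zero set (its argmin) is not a convex set — or, equivalently here, contains two points with a higher value between them — cannot be convex, and I expect the explicit low-dimensional instance to make this entirely transparent, so I would lead with that instance and relegate the general dimension count to a remark.
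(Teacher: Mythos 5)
Your overall strategy is sound and of the same basic type as the paper's proof --- a concrete low-dimensional LQ counterexample in which the midpoint convexity inequality fails --- but you choose your two witness points differently, and the paper's choice is the easier one to execute. The paper does not look for global minimizers at all: it fixes a specific two-agent game ($A=B^i=R^i=I_2$, $T=2$, $x_0=[1,-1]^\top$), picks two parameter settings $\GtParamsguess{}_1,\GtParamsguess{}_2$ that differ only in the off-diagonal entry of one $Q$ block ($\pm 0.8$), numerically evaluates $\loss(\GtParamsguess{}_1)=0.0612$, $\loss(\GtParamsguess{}_2)=0.6025$, and $\loss(\tfrac12(\GtParamsguess{}_1+\GtParamsguess{}_2))=0.4041$, and observes that the midpoint value exceeds the average of the endpoint values --- three function evaluations and done. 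Your plan instead requires two \emph{exact} global minimizers of $\loss$, which is conceptually cleaner (a convex function has a convex argmin set) but puts the entire burden on your step two: you must actually produce a second parameter point $\GtParamsguess{}'\neq\{\secondgtparams{i*}\}$ with $\LQcontrolfunction(\GtParamsguess{i\prime})$ exactly matching the data, i.e.\ solve a nonlinear (rational) system exactly rather than just evaluate the loss. Your dimension count makes this plausible but is not by itself a proof --- a smooth map from a higher-dimensional space can still have a zero-dimensional fiber at a critical value (consider $x^2+y^2$ at the origin), so you need the ground truth to be a regular point of the parameter-to-control map, or the explicit closed-form instance you defer to. Likewise, strict positivity of the loss at the midpoint does not follow automatically from non-affineness of $M(\cdot)^{-1}$ (the segment between two zero-loss points could a priori lie in the zero set) and must be checked numerically, exactly as the paper checks its three values. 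So: valid skeleton, correct logic, but the two load-bearing verifications are asserted rather than carried out, and the paper's version of the counterexample shows you can avoid the harder of the two entirely by not insisting that the witness points be minimizers.
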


% In what follows, we analyze the performance of level-1 inference when the collected observation data involves agents with misaligned estimate of others' objectives. To facilitate the analysis, we define the prediction loss under level-1 inference as 
% \todo{Add a table that explains what each parameter theta is.}
In what follows, we analyze the performance of level-1 inference when observed data includes agents with misaligned estimates of others' objectives. 
% \hmzh{the next sentence is confusing}
% \hmzh{Confusing notation - not clear that homogeneous means all the same $\theta^i$ and heterogeneous means can be different. Maybe add a table to explain?}
% Note that any level-1 inverse game parameter $\GtParams{}$ can be equivalently represented as a \emph{homogeneous} level-2 inverse game parameter.
We note that any level-1 inverse game parameter $\GtParams{}$ can be equivalently represented as a \emph{homogeneous} level-2 inverse game parameter $\LevelOneGtParamsguess$, {defined so that
% We define $\LevelOneGtParamsguess$ so that e
each player's parameters $\LevelOneGtParamsguess^\playeridx$ are identical to $\GtParams{}$, and %so 
the level-2 encoding of the level-1 problem is given by
% \hmzh{Hamzah: Fix the homogeneous definition. just have $\Theta^i = \theta \forall i$.}
% inverse game parameter representation of the level-1 inverse game parameters is given by}
\begin{equation}
\LevelOneGtParamsguess \equiv \{\LevelOneGtParamsguess^i\}_{i=1}^N; \quad \LevelOneGtParamsguess^\playeridx = \GtParams{} \quad 
% \!=\! \{\gtparams{\playeridx, l} \}_{l=1}^N}  \!=\! \LevelOneGtParamsguess^j \!=\! \GtParams{} \!=\! \{\theta^{l} \}_{l=1}^N \quad 
% \forall \playeridx, j \in [\numplayers{}].
\forall \playeridx \in [\numplayers{}].
\end{equation}
% \begin{equation}
% \LevelOneGtParamsguess^\playeridx = \LevelOneGtParamsguess^j = \GtParams{} \qquad \forall \playeridx, j \in [\numplayers{}]
% \end{equation}
% Thus, the level-2 inverse game parameter representation of the level-1 inverse game parameters is given by
% \begin{equation}
% \LevelOneGtParamsguess \equiv \{\LevelOneGtParamsguess^i\}_{i=1}^N; \quad \LevelOneGtParamsguess^i \equiv \{\gtparams{1}, \gtparams{2}, \dots, \gtparams{N}\} \forall \playeridx{} \in [\numplayers{}].
% \end{equation}
% \begin{equation}
% \LevelOneGtParamsguess \equiv \{\LevelOneGtParamsguess^\playeridx{} | \LevelOneGtParamsguess^\playeridx = \LevelOneGtParamsguess^j \forall j \in [\numplayers{}] \}_{i=1}^N; ~ \LevelOneGtParamsguess^i \equiv \{\gtparams{1}, \gtparams{2}, \dots, \gtparams{N}\} \forall \playeridx{} \in [\numplayers{}]
% \end{equation}
% We define $\LevelOneGtParamsguess$ so that each $\LevelOneGtParamsguess^\playeridx$
% Let $\LevelOneGtParamsguess$ be \emph{homogeneous} level-2 inverse game parameter which is equivalent to a level-1 inverse 
% \begin{equation}
% \LevelOneGtParamsguess \equiv \{\LevelOneGtParamsguess^i\}_{i=1}^N; ~ \LevelOneGtParamsguess^i \equiv \{\gtparams{1}, \gtparams{2}, \dots, \gtparams{N}\} \forall \playeridx{} \in [\numplayers{}]
% \end{equation}
% }
% $\LevelOneGtParamsguess \equiv \{\LevelOneGtParamsguess^i\}_{i=1}^N$, where $\LevelOneGtParamsguess^i \equiv \{\gtparams{1}, \gtparams{2}, \dots, \gtparams{N}\}$ for all $i \in [N]$. 
Let $\bar{\Theta}^*$ denote the optimal homogeneous level-2 parameter minimizing $\loss(\LevelOneGtParamsguess)$. %The above analysis suggests that 
Then, the optimal level-2 inference solution $\GtParamsguess{*}$ always renders $\loss(\GtParamsguess{*})\le \loss(\LevelOneGtParamsguess^*)$.
Moreover, we establish the following upper and lower bounds on $\loss(\LevelOneGtParamsguess^*)$, assuming observation data is generated under the level-2 model.

% \begin{equation}
%     \loss(\LevelOneGtParamsguess) = \sum_{i=1}^N\sum_{t=1}^T \| G(\xstate{t}{i}(\LevelOneGtParamsguess)) -  \obs{t}{i} \|_2
% \end{equation}
% We denote by the minimizer of $L_1(Q)$ as $\hat{Q}^*$. By leveraging the structure of the matrix format KKT conditions, we can derive the following performance bounds of level-1 inference
\begin{proposition}\label{prop:bounds}
    Let $\GtParams{}^* \equiv \{\theta^{i*}\}_{i=1}^N$ be the level-1 ground truth parameters %parameterizing
    of all agents'
    % each agent's 
    objectives. Using these level-1 parameters, for each $i\in[N]$, we construct homogeneous level-2 parameters $\check{\Theta}^i\equiv \GtParams{}^*$,
    % \{ \theta^{j*}  \}_{j=1}^N$, 
    and define $\check{\Theta}\equiv \{\check{\Theta}^i\}_{i=1}^N$. %and $\{\theta^{i*}\}_{i=1}^N$. %\david{i'm not sure i fulloy this. also it is weird that these variables don't all arise in the eqns below} 
    %are the ground truth parameters for each agent's own objective.
    Denote by $\sigma_{\min}(\cdot)$ and $\sigma_{\max}(\cdot)$ the smallest and largest singular values, respectively. Define $\{\secondgtparams{i*}\}_{i=1}^N$ as in Proposition~\ref{prop:nonconvex main text}. The minimum loss of the level-1 inference can then be upper bounded by
    \begin{equation}
        \loss( \LevelOneGtParamsguess^* ) \!\le\! \frac{1}{2}\sum_{i=1}^N \left(\!\frac{\sigma_{\max}(M( \secondgtparams{i*} ) \!-\! M( \check{\Theta} ))}{\sigma_{\min}(M( \constructedlevelonesol ))} \|\LQtotalvariable^i(\secondgtparams{i*})\|_2\!\right)^2.
    \end{equation}
    In addition, let $\check{M}\equiv(\sum_{i=1}^N \frac{1}{N}(M(\secondgtparams{i*} ))^{-1})^{-1}$, and let $E^i $ be a matrix which selects the rows of the vector $\LQtotalvariable^i(\secondgtparams{i*})$ corresponding to agent $i$'s control. Then, we construct the lower bound %\david{is it simpler to just skip to (21)?}
    \begin{align}    
        \loss(\LevelOneGtParamsguess^*) 
        \ge  \frac{1}{2}\sum_{i=1}^N \left(\frac{\|E^i(M( \secondgtparams{i*} ) - \check{M} )\LQtotalvariable(\secondgtparams{i*})\|_2}{\sigma_{\max}( \check{M} )}\right)^2.
        % \ge\sum_{j=1}^N \frac{\|E^j\check{M} \LQtotalvariable^j\|_2}{\sigma_{\max}(M( \GtParamsguess{j} ))}.\nonumber
    \end{align}
\end{proposition}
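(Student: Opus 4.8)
The plan is to exploit the closed-form expression $\LQtotalvariable^i(\GtParamsguess{i}) = -M(\GtParamsguess{i})^{-1} S \initstate{}$ for the hypothesized equilibrium of each agent, together with the fact that the level-1 (homogeneous) inference is a constrained minimization over a single shared parameter $\GtParams{}$. For the \textbf{upper bound}, I would simply plug in the \emph{feasible} (not necessarily optimal) homogeneous parameter $\constructedlevelonesol$. Since $\loss(\LevelOneGtParamsguess^*) \le \loss(\constructedlevelonesol)$ by optimality of $\LevelOneGtParamsguess^*$, it suffices to bound $\loss(\constructedlevelonesol) = \sum_i \sum_t \tfrac12 \|\LQcontrolfunction_t(\constructedlevelonesol^i) - \ctrl{t}{i,i}\|_2^2 = \sum_i \tfrac12 \|E^i(\LQtotalvariable(\constructedlevelonesol^i) - \LQtotalvariable^i(\secondgtparams{i*}))\|_2^2$ (the data $\ctrl{}{i,i}$ being $E^i \LQtotalvariable^i(\secondgtparams{i*})$). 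Dropping the selection matrix $E^i$ only increases the norm, so I would bound each term by $\tfrac12 \|\LQtotalvariable(\constructedlevelonesol^i) - \LQtotalvariable^i(\secondgtparams{i*})\|_2^2$. Both vectors solve an equation of the form $M(\cdot)\LQtotalvariable = -S\initstate{}$; subtracting, $M(\constructedlevelonesol^i)(\LQtotalvariable(\constructedlevelonesol^i) - \LQtotalvariable^i(\secondgtparams{i*})) = (M(\secondgtparams{i*}) - M(\constructedlevelonesol^i))\LQtotalvariable^i(\secondgtparams{i*})$, so $\|\LQtotalvariable(\constructedlevelonesol^i) - \LQtotalvariable^i(\secondgtparams{i*})\|_2 \le \sigma_{\max}(M(\secondgtparams{i*}) - M(\constructedlevelonesol^i))/\sigma_{\min}(M(\constructedlevelonesol^i)) \cdot \|\LQtotalvariable^i(\secondgtparams{i*})\|_2$. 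One caveat is matching the notation in the statement, which writes $M(\constructedlevelonesol)$ with no superscript; since every $\constructedlevelonesol^i$ equals the same tuple $\{\theta^{j*}\}_{j=1}^N$, this is just shorthand, and the bound as stated follows after summing over $i$.

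For the \textbf{lower bound}, the key idea is that \emph{any} feasible level-1 solution must use a \emph{single} parameter $\GtParams{}$ across all $N$ hypothesized games, so its predicted controls $\{E^i \LQtotalvariable(\GtParams{})\}_{i=1}^N$ are all tied together through $M(\GtParams{})$. I would like to lower-bound the minimal aggregate residual over all choices of $\GtParams{}$. The route I'd take: for a fixed $\GtParams{}$, write $\LQtotalvariable(\GtParams{}) = -M(\GtParams{})^{-1}S\initstate{}$; the residual for agent $i$ is $E^i(\LQtotalvariable(\GtParams{}) - \LQtotalvariable^i(\secondgtparams{i*}))$. Rather than optimize over $\GtParams{}$ directly, I'd relax: observe that each feasible configuration corresponds to choosing primal-dual vectors $w^i \equiv \LQtotalvariable(\GtParams{})$ that are all \emph{equal} (call the common value $w$) and satisfy the $N$ constraints $M(\GtParams{})w = -S\initstate{}$ simultaneously. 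The matrix $\check{M} = (\tfrac1N\sum_i M(\secondgtparams{i*})^{-1})^{-1}$ is the harmonic-type average whose inverse is the average of the inverses; the vector $\check{M}^{-1}(-S\initstate{}) = \tfrac1N \sum_i M(\secondgtparams{i*})^{-1}(-S\initstate{}) = \tfrac1N\sum_i \LQtotalvariable^i(\secondgtparams{i*})$ is precisely the centroid of the ground-truth hypothesized equilibria. I would argue that the best a single shared $w$ can do is, in an appropriate sense, no better than this centroid, so that $\sum_i \|E^i(w - \LQtotalvariable^i(\secondgtparams{i*}))\|^2 \ge \sum_i \|E^i(\check M^{-1}S\initstate{} \text{-analog} - \LQtotalvariable^i(\secondgtparams{i*}))\|^2$, and then convert $w - \LQtotalvariable^i(\secondgtparams{i*})$ into the form $(M(\secondgtparams{i*}) - \check M)\LQtotalvariable(\secondgtparams{i*})$ appearing in the statement by left-multiplying the identity $M(\secondgtparams{i*})\LQtotalvariable^i(\secondgtparams{i*}) = -S\initstate{} = \check M \cdot (\check M^{-1}(-S\initstate{}))$ and bounding below by dividing by $\sigma_{\max}(\check M)$, i.e. $\|E^i(\text{centroid} - \LQtotalvariable^i)\|_2 \ge \|E^i(M(\secondgtparams{i*}) - \check M)\LQtotalvariable(\secondgtparams{i*})\|_2 / \sigma_{\max}(\check M)$.

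The \textbf{main obstacle} is the lower bound: making precise why the centroid $\tfrac1N\sum_i \LQtotalvariable^i(\secondgtparams{i*})$ is (up to the stated constants) the optimal single shared response, and that no other $\GtParams{}$ — which induces a different $M(\GtParams{})$ and hence a different $w$ — can beat it. The honest version of this argument is that $\min_w \sum_i \tfrac12\|w - \LQtotalvariable^i(\secondgtparams{i*})\|_2^2$ is attained at the unweighted centroid, and dropping the selection matrices $E^i$ as well as restricting attention only to those $w$ actually realizable as $\LQtotalvariable(\GtParams{})$ can only increase the true minimum; but the inequality in the proposition keeps the $E^i$, so one cannot simply drop them on the lower-bound side, and a little care is needed (likely using that $\sum_i E^i{}^\top E^i \preceq I$, or a Cauchy–Schwarz / Jensen step) to land exactly on the stated right-hand side. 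I would also need to verify that $\check M$ is well-defined (each $M(\secondgtparams{i*})$ invertible — which holds by the standing assumption — and the average of inverses invertible, which follows since it is a convex combination of invertible matrices whose images are all of $\mathbb{R}^{n_{\bar z}}$) and positive-definite-like enough that $\sigma_{\max}(\check M)$ is finite and nonzero. Modulo these bookkeeping points, both bounds reduce to the single linear-algebra identity $M(\cdot)^{-1}$ applied to the shared right-hand side $-S\initstate{}$, and the residual-comparison argument above.
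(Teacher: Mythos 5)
Your upper-bound argument is exactly the paper's: evaluate the loss at the feasible homogeneous point $\constructedlevelonesol$, drop the selection to the control components (which only enlarges each term), and apply the identity $M(\constructedlevelonesol)\,\delta\LQtotalvariable^{j}=\bigl(M(\secondgtparams{j*})-M(\constructedlevelonesol)\bigr)\LQtotalvariable^{j}$ followed by the singular-value bound. This half is complete and correct, and your reading of $M(\constructedlevelonesol)$ as shorthand for the common matrix $M(\constructedlevelonesol^{i})$ is also right.

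For the lower bound your strategy coincides with the paper's as well: relax the realizability constraint on $M(\LevelOneGtParamsguess)^{-1}$, identify the optimum of the relaxed problem with the centroid $\tfrac1N\sum_{j}M(\secondgtparams{j*})^{-1}S\initstate{}=\check{M}^{-1}S\initstate{}$, and convert the residual into $(M(\secondgtparams{j*})-\check{M})\LQtotalvariable$ divided by $\sigma_{\max}(\check{M})$. The obstacle you flag, however, is not bookkeeping---it is the substantive step, and it is exactly the one the paper elides. The paper sets $Y=M(\LevelOneGtParamsguess)^{-1}$, takes the stationarity condition of the relaxed loss \emph{with the $E^{j}$ omitted}, obtains $Y^{*}=\tfrac1N\sum_{j}M(\secondgtparams{j*})^{-1}=\check{M}^{-1}$, and then substitutes $Y^{*}$ back into the loss with the $E^{j}$ reinstated, asserting $\loss(\LevelOneGtParamsguess)\ge\tfrac12\sum_{j}\|E^{j}(\check{M}^{-1}S\initstate{}-M(\secondgtparams{j*})^{-1}S\initstate{})\|_{2}^{2}$. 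As you essentially observe, with the $E^{j}$ present the fully relaxed minimum over an arbitrary vector $w$ is zero, because the $E^{j}$ select disjoint control blocks and a single $w$ can match every $E^{j}\LQtotalvariable^{j}$ simultaneously; so the centroid is \emph{not} the minimizer of the projected relaxed problem, and the claimed inequality does not follow from relaxation alone. The same tension reappears in the paper's final step, which uses $\|E^{j}\check{M}^{-1}v\|_{2}\ge\|E^{j}v\|_{2}/\sigma_{\max}(\check{M})$ even though the projection and $\check{M}^{-1}$ do not commute. So: your plan matches the paper's, your upper bound is done, and the caveat you raise on the lower bound is a genuine gap in both your sketch and the paper's own argument---you should not expect to close it with Jensen or $\sum_{i}E^{i\top}E^{i}\preceq I$ alone.
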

% \hmzh{long sentence, not clear to me}
\noindent 
% \david{the notation is getting excessive. At the very least, you need to provide a complete table of notation in the supplementary material, and give an explicit pointer to that table here.}
% \todo{Make and cite the table with \textbf{every} symbol that exists in the paper.}
Proposition~\ref{prop:bounds} shows that the heterogeneity of the observed agents' ground truth level-2 parameters
% \todo{introduce a function that measures heterogeneity, and explain intutitively what it means/the difference with homogeneity}
$\{\secondgtparams{i*}\}_{i=1}^N$, which is measured by $\sum_{i=1}^N\|M( \secondgtparams{i*} ) - \check{M}\|_2$, affects the lower bound of the level-1 inference loss. In \cref{fig:lq-experiment}, we empirically demonstrate that level-2 inference in LQ settings, solved using the gradient-based inverse game solver introduced in the following sections, is capable of handling greater heterogeneity in agents’ estimates of each other’s objectives, whereas the performance of level-1 inference significantly deteriorates under similar conditions.

\section{Mixed Complementarity Transcription for Level-2 Inverse Games}
\label{ssec:forward-param-games-as-mcps}
% In this section, we derive KKT conditions for constrained Nash games under the level-1 assumption that all agents share the game parameters $\GtParams{}$. 
%We define a \emph{level-2 game} in which each agent $i$ knows its own true objective parameter $\gtparams{i}$ and maintains estimates $\params{i,-i}\equiv \{\params{i,1}, \cdots, \params{i,i-1}, \params{i,i+1}, \cdots, \params{i,N}\}$ of other agents' objectives, collectively denoted as $\GtParams{i} \equiv \{\gtparams{i}, \params{i,-i}\}$. Each agent independently computes an LGNE for its hypothesized game parameterized by $\GtParams{i}$, selects its control component from this equilibrium as its rational decision, and subsequently interacts with others.

Next, we consider a more general setting: nonlinear dynamics and non-quadratic costs.
% \hmzh{
% In this section, we introduce a method to solve level-2 inverse games. 
First, we transcribe the level-2 game into a mixed complementarity problem (MCP), which we solve using an off-the-shelf differentiable MCP solver.
Then, we infer the level-2 parameters from the data using the implicit function theorem to evaluate the gradients of the MCP solution with respect to the level-2 parameters, and the chain rule to evaluate the gradient of the loss in \cref{eq:level-2 log-likelihood-inverse-objective} with respect to those parameters.
% and minimize the loss $\loss( \GtParamsguess{})$ described in \cref{eq:level-2 log-likelihood-inverse-objective}.
% }

% , and introduce our level-2 inference algorithms. 
% \hmzh{As described in \cref{eq:level-2-inverse-game}, solving a level-2 inverse game requires solving a level-2 game.\jingqi{Jingqi: we need many calls of the level-2 games, so we can say "solving level-2 games". Actually, this first sentence is not as clear as the next sentence. I think we can remove it. }
In a level-2 game, each player plays a hypothesized level-1 game \cref{eq:level-2-inverse-game-nash-constraint}, $\Gamma(\GtParamsguess{i})$, and 
each resulting hypothesized LGNE $(\hypothesizedStateTraj{\playeridx}, \hypothesizedCtrlTraj{\playeridx}) \in \Gamma(\GtParamsguess{\playeridx})$ can be computed independently.
% For brevity, and because player $\playeridx{}$ is clear from context, we denote the $(\hypothesizedStateTraj{\playeridx}, \hypothesizedCtrlTraj{\playeridx})$ equivalently as
% \begin{equation}
% \label{eq:simplified-hypothesized-trajectories-for-i}
% \hypothesizedStateTraj{} 
% % \equiv \hypothesizedStateTraj{\playeridx} 
% \equiv\{\StateTraj{i,j}\}_{j=1}^{\numplayers{}} \text{ and } \hypothesizedCtrlTraj{} 
% % \equiv \hypothesizedCtrlTraj{\playeridx} 
% \equiv \{ \CtrlTraj{i,j}  \}_{j=1}^{\numplayers{}}.
% \end{equation} 
% }
% As each hypothesized level-1 game $\Gamma(\GtParamsguess{i})$ can be solved independently, we simplify the notation }
% \hmzh{For each agent $\secondidx{} \in [\numplayers{}]$ playing hypothesized game $\Gamma(\GtParamsguess{\playeridx})$}, \jingqi{The first sentence is confusing, it is hard to distinguish player j plays player i's hypothesized game, or player i hypothsized that player j acts in player i's hypothesized games
Consider agent $\playeridx$'s hypothesized game, $\Gamma(\GtParamsguess{i})$.
% Within each agent $i$'s hypothesized game $\Gamma(\GtParamsguess{i})$, 
For each agent $\secondidx \in [\numplayers{}]$, 
we associate inequality constraints 
% \todo{Does this writing look good and read clearly}?
% \david{need to watch indexing more carefully. x, u need indices no? what about i indices for multipliers?} \todo{g and h are defined over the whole trajectories. Was that the concern? x and u for the hypothesized game should have indices indicating their hypothetical nature.} \todo{These capital variables should each match those in (12) and be different depending on different hyp. games.}
% \todo{refer to $X^i, U^i$ as defined in \cref{eq:hypothesized-trajectories-for-i}}
$0 \le g^j(\hypothesizedStateTraj{\playeridx},\hypothesizedCtrlTraj{\playeridx})$ 
% $0 \le g^j(\StateTraj{},\CtrlTraj{})$ 
with multipliers $\hypinequalityConstraintLagrange{\playeridx,j}\ge 0$, and equality constraints 
$0 = h^j(\hypothesizedStateTraj{\playeridx},\hypothesizedCtrlTraj{\playeridx})$ 
% $0 = h^j(\StateTraj{},\CtrlTraj{})$
with multipliers $\hypequalityConstraintLagrange{\playeridx,j}$. 
Like in the LQ settings, at each stage $t$, we introduce multipliers for dynamics constraints  ($\hypdynamicsLagrange{t}{\playeridx,j}$}) and 
% is introduced. A multiplier 
initial state constraint $0=\hypxstate{1}{\playeridx,j} - \xstate{\textrm{init}}{j}$ ($\hypdynamicsInitLagrange{\playeridx,j}$). 
% $\hypdynamicsInitLagrange{\playeridx,j}$ is also introduced for the initial state constraint $0=\hypxstate{1}{\playeridx,j} - \xstate{\textrm{init}}{j}$. 
% As before, w
We define $\hypInequalityConstraintLagrange{\playeridx} \equiv \{\hypinequalityConstraintLagrange{\playeridx,j}\}_{j=1}^N$
$\hypEqualityConstraintLagrange{\playeridx} \equiv \{\hypequalityConstraintLagrange{\playeridx,j}\}_{j=1}^{\numplayers{}}\in\mathbb{R}^{\numequalitylagrange{}}$, and $\hypDynamicsInitLagrange{\playeridx} \equiv \{\hypdynamicsInitLagrange{\playeridx,j}\}_{j=1}^N\in \mathbb{R}^{\numconstlagrange{}}$. 
%Moreover, we denote by $\dynamicsLagrange{}{i}\equiv \{\dynamicsLagrange{t}{i}\}_{t=1}^{T-1} $ and $\DynamicsLagrange{}\equiv\{\dynamicsLagrange{}{i}\}_{i=1}^{N}$. 
%Let $\numequalitylagrange{}$ and $\numconstlagrange{}$ be the total number of equality and inequality constraints, respectively. 
% \todo{Add hats to duals.}
% \jingqi{Jingqi: Should we have hats on the duals?} \david{If hat is a consistent modifier for a hypothesized game, then yes. 
% \david{Put hats on z as well.}
We compactly store all variables as
\begin{equation}\label{eq:z}
    \totalvariable^{\playeridx} \equiv 
    [\hypothesizedStateTraj{\playeridx}, \hypothesizedCtrlTraj{\playeridx},
    % [\StateTraj{}, \CtrlTraj{},
    \hypDynamicsLagrange{\playeridx},\hypdynamicsInitLagrange{\playeridx},\hypequalityConstraintLagrange{\playeridx}, \hypinequalityConstraintLagrange{\playeridx} ] \in \mathbb{R}^{\horizon{}(\numstates{}  + \numctrls{} +  \numstates{} ) + \numequalitylagrange{} + \numconstlagrange{}  }.
\end{equation} 
% Under agent $i$'s estimate $\secondgtparams{i}$ of all agents' objectives, t
In agent $i$'s hypothesized game $\Gamma(\GtParamsguess{i})$, the Lagrangian for agent $j$ is defined as follows:  
% \david{can probably use align and get a little cleaner} %\david{watch i vs j}
% \david{why no eqn number here or below? also please fix awk spacing. i bet claude/gpt can fix instantly fwiw}
% \david{please make this more compact. it is a waste of space now}
\begin{equation}
%  \begin{aligned}
%     \lagrangian{\playeridx,j}(\totalvariable^{\playeridx}; 
%     % \GtParamsguess{i} 
%     \params{\playeridx,\secondidx})
%     =& \costfn{j}(\hypothesizedStateTraj{\playeridx}, \hypothesizedCtrlTraj{\playeridx}; \params{\playeridx,\secondidx}) 
%     + \hypequalityConstraintLagrange{\playeridx,j}{}^\top \equalityconstraint{j}(\hypothesizedStateTraj{\playeridx}, \hypothesizedCtrlTraj{\playeridx})
%     % (\StateTraj{}, \CtrlTraj{})
%     + \hypdynamicsInitLagrange{\playeridx,j}{}^\top(\hypxstate{1}{\playeridx{}, j} - \xstate{\textrm{init}}{j}) + \\
%     &\sum_{t=1}^{T-1}\hypdynamicsLagrange{t}{\playeridx,j}{}^\top (\hypxstate{t+1}{\playeridx, j} - \dynamics{j}(\hypxstate{t}{\playeridx, j}, \hypctrl{t}{\playeridx,j}))
%     - \hypinequalityConstraintLagrange{\playeridx,j}{}^\top \inequalityconstraint{j}(\hypothesizedStateTraj{}, \hypothesizedCtrlTraj{}).
%     % (\StateTraj{}, \CtrlTraj{}). 
% \end{aligned}
 \begin{aligned}
    \lagrangian{\playeridx,j}(\totalvariable^{\playeridx}; 
    % \GtParamsguess{i} 
    \params{\playeridx,\secondidx})
    =& \costfn{j}(\hypothesizedStateTraj{\playeridx}, \hypothesizedCtrlTraj{\playeridx}; \params{\playeridx,\secondidx}) 
    + \hypequalityConstraintLagrange{\playeridx,j}{}^\top \equalityconstraint{j}(\hypothesizedStateTraj{\playeridx}, \hypothesizedCtrlTraj{\playeridx}) \\
    % (\StateTraj{}, \CtrlTraj{})
    &+ \sum_{t=1}^{T-1}\hypdynamicsLagrange{t}{\playeridx,j}{}^\top (\hypxstate{t+1}{\playeridx, j} - \dynamics{j}(\hypxstate{t}{\playeridx, j}, \hypctrl{t}{\playeridx,j})) \\
    &+ \hypdynamicsInitLagrange{\playeridx,j}{}^\top(\hypxstate{1}{\playeridx{}, j} - \xstate{\textrm{init}}{j})
    - \hypinequalityConstraintLagrange{\playeridx,j}{}^\top \inequalityconstraint{j}(\hypothesizedStateTraj{}, \hypothesizedCtrlTraj{}).
    % (\StateTraj{}, \CtrlTraj{}). 
\end{aligned}
\end{equation}
% \begin{subequations}
%     \begin{align}
%         \lagrangian{j}(\totalvariable; \GtParamsguess{i})
%     =~& \costfn{j}(\StateTraj{}, \CtrlTraj{}; \GtParamsguess{i})  \\
%     &\quad + {\equalityConstraintLagrange{j}}^\top \equalityconstraint{j}(\StateTraj{}, \CtrlTraj{}) \\
%     &\quad + {\dynamicsInitLagrange{j}}^\top(\xstate{1}{j} - \xstate{\textrm{init}}{j})  \\ 
%     &\quad + \sum_{t=1}^{T-1}{\dynamicsLagrange{t}{j}}^\top (\xstate{t+1}{j} - \dynamics{j}(\xstate{t}{j}, \ctrl{t}{j})) \\
%     &\quad - {\inequalityConstraintLagrange{j}}^\top \inequalityconstraint{j}(\StateTraj{}, \CtrlTraj{}). 
%     \end{align}
% \end{subequations}
% \david{need to explain dropped superscripts and subscripts}
Under an appropriate constraint qualification (e.g., the linear independence constraint qualification, cf. \cite{wright1999numerical}),
% We assume certain standard assumptions hold, namely the linear independent constraint qualification (LICQ) \cite{wright1999numerical}.
% Under the LICQ assumption,
a hypothesized LGNE satisfies the following Karush-Kuhn-Tucker (KKT) conditions
\begin{equation}
    \forall j \in [N] \left\{\begin{aligned}
        & 0=\nabla_{(\hypothesizedStateTraj{\playeridx}, \hypothesizedCtrlTraj{\playeridx{},\secondidx{}})}
        % (\StateTraj{}, \CtrlTraj{j})} 
        \lagrangian{\playeridx,j}
        (
        %\StateTraj{}, \CtrlTraj{}, \DynamicsLagrange{}, \DynamicsInitLagrange{},\DynamicsLagrange{}, \InequalityConstraintLagrange{} 
        \totalvariable^\playeridx;
        \params{\playeridx,\secondidx} ),\\
        % \GtParamsguess{i} )\\
        & 0 = \equalityconstraint{j}(\hypothesizedStateTraj{\playeridx}, \hypothesizedCtrlTraj{\playeridx}), \\
        % (\StateTraj{}, \CtrlTraj{}) \\
        & 0 = \hypxstate{1}{\playeridx, j} -
        % \xstate{1}{j} -
        \xstate{\textrm{init}}{j},\\
        % & 0 = \xstate{t+1}{j} - \dynamics{j}(\xstate{t}{j}, \ctrl{t}{j}),\ t\in[T-1] \\
        & 0 = \hypxstate{t+1}{\playeridx, j} - \dynamics{j}(\hypxstate{t}{\playeridx, j}, \hypctrl{t}{\playeridx, j}),\ t\in[T-1], \\
        & 0 \le \inequalityconstraint{j} 
        (\hypothesizedStateTraj{\playeridx}, \hypothesizedCtrlTraj{\playeridx})
        % (\StateTraj{}, \CtrlTraj{}) 
        \perp \hypinequalityConstraintLagrange{\playeridx,j}\ge 0.
    \end{aligned}\right.
\end{equation}
Following standard procedures, cf. \cite[Ch. 1]{facchinei2003finite} and \cite[Eq. (50)]{SmoothGameTheory}, we can transcribe these KKT conditions as a mixed complementarity problem (MCP), for which efficient off-the-shelf solvers exist \cite{dirkse1995path}. To this end,
% Motivated by the numerical efficiency of existing solvers such as PATH \cite{dirkse1995path} for solving such mixed complementarity problems \cite[\S9.4.2]{facchinei2003finite}, we transform the above KKT conditions into mixed complementarity conditions by following a similar process as in 
% described in
% \citet[Eq. (50)]{SmoothGameTheory}.
% and \citet{SmoothGameTheory}}, 
we define two vector functions 
% \david{why not both out of line? should fit} \todo{Is this what was meant? Make it a column vector and side by side.}
$F_\textrm{ineq}(\totalvariable^{\playeridx};\GtParamsguess{i})\equiv[g^{1\top}, \cdots, g^{N\top}]^\top$
% \begin{equation*}
% F_\textrm{ineq}(\totalvariable^{\playeridx};\GtParamsguess{i})\equiv[g^{1\top}, \cdots, g^{N\top}]^\top
% \end{equation*} 
and 
% $F_\textrm{eq}(\totalvariable;\GtParamsguess{i})$: %\david{watch bolding of z, and check that transposes are always the same symbol $\top$ vs $\T$}: %\david{format better}
\begin{equation*}
    % F_\textrm{ineq}(\totalvariable^{\playeridx};\GtParamsguess{i})\equiv\left[\begin{array}{c}g^{1\top} \\ \vdots \\ g^{N\top}\end{array}\right],
    % \text{ and }
    F_{\textrm{eq}}(\totalvariable^{\playeridx};\GtParamsguess{i}) \equiv\begin{bmatrix}\begin{bmatrix}
        \nabla_{(\hypothesizedStateTraj{\playeridx}, 
        \hypothesizedCtrlTraj{\playeridx{},\secondidx{}}
        % \hypothesizedCtrlTraj{\playeridx, \secondidx}
        )} \lagrangian{\playeridx,j} \\
        % \nabla_{(\StateTraj{}, \CtrlTraj{j})} \lagrangian{j} \\ 
        \equalityconstraint{j}  \\
        \hypxstate{1}{\playeridx, j} - \xstate{\textrm{init}}{j}\\
        % \xstate{1}{j} - \xstate{\textrm{init}}{j}\\
        \big[[\hypxstate{t+1}{\playeridx, j} - \dynamics{j} (\hypxstate{t}{\playeridx, j}, \hypctrl{t}{\playeridx, j})\big]_{t=1}^{T-1}
    \end{bmatrix}_{j=1}^N\end{bmatrix},
\end{equation*}
and we aim to find a solution $\totalvariable^*$ for the MCP % the mixed complementary conditions:
\begin{equation}\label{eq: MCP conditions nonlinear}
    \left\{\begin{aligned}
        & 0 = F_{\textrm{eq}}(\totalvariable^{\playeridx}; \GtParamsguess{i}) \\
        & 0 \le  \hypinequalityConstraintLagrange{\playeridx}  \ \  \perp \ \ F_{\textrm{ineq}}(\totalvariable^{\playeridx}; \GtParamsguess{i}) \ge 0. 
    \end{aligned}\right.
\end{equation}
% Leveraging the PATH solver, we can efficiently compute an LGNE. 
% \todo{why do we need these conditions and what do they mean?}
% \david{PATH is not differentiable directly. ParametricMCPs built the IFT bit and is actually solver independent. It should be mentioned in the text.}
Moreover, under certain standard regularity conditions such as strict complementarity  \cite{wright1999numerical}, following \cite{liu2023learning} we can apply the implicit function theorem \cite{krantz2002implicit} to the above mixed complementarity conditions and use the chain rule to obtain the gradient of $\totalvariable^{\playeridx*}$ with respect to $\GtParamsguess{i}$, i.e., $\nabla \totalvariable^{\playeridx*}(\GtParamsguess{i})$.
In our work, we use the ParametricMCPs.jl library \cite{ParametricMCPs} to accomplish this gradient evaluation. 
% \todo{find a cite for this?}
%With this gradient available, we can derive the loss gradient $\nabla \loss(\GtParamsguess{})$:%With this in hand, we can derive the loss gradient $\nabla \loss(\GtParamsguess{}) $:
% , through automatic differentiation \cite[Ch. 8]{wright1999numerical}. 
% Applying chain rule to the loss function $\loss(\GtParamsguess{})$ in \eqref{eq:level-2 log-likelihood-inverse-objective}, we can derive a gradient estimator for efficiently minimizing the nonconvex loss $\loss( \GtParamsguess{} )$:
% \begin{align}\label{eq:loss gradient}
% \nabla \loss(\GtParamsguess{}) & = \sum_{i=1}^N  \nabla \loss(\GtParamsguess{i})\\
%     \nonumber  = \sum_{i=1}^N & \sum_{t=1}^T  \left(\frac{\partial \xstate{t}{i,i*}}{\partial\GtParamsguess{i}}\right)^\top   \! \left( \frac{ \partial G(\xstate{t}{i,i*})}{\partial \xstate{t}{i,i*}} \right)^\top \mathbb{I}_{\obs{t}{\playeridx}}  \Big( G( \xstate{t}{i,i*}) - \obs{t}{i} \Big)
% \end{align}
After reshaping $\GtParamsguess{}$ as a vector $\thetavec$, we obtain the gradient of $\loss(\GtParamsguess{})$ with respect to an arbitrary element $\thetavec_j$ via the chain rule:% through automatic differentiation \cite[Ch. 8]{wright1999numerical}:
% \begin{equation}\label{eq:loss gradient}
%     \frac{\partial \loss }{\partial \thetavec_j} = \sum_{i=1}^N\sum_{t=1}^T\frac{\partial \loss  }{\partial  G \Big(\hypxstate{t}{(i,i)*}\Big) } \cdot  \frac{\partial G\Big(\hypxstate{t}{(i,i)*}\Big) }{\partial  \hypxstate{t}{(i,i)*} } \cdot \frac{\partial \hypxstate{t}{(i,i)*}  }{\partial \thetavec_j  }
% \end{equation}
% \jingqi{
% \begin{equation}\label{eq:loss gradient}
% \begin{aligned}
%     \frac{\partial \loss }{\partial \thetavec_j} &= \sum_{i=1}^N\frac{\partial \loss  }{\partial  G \Big(\hypothesizedStateTraj{(i,i)*}, \hypothesizedCtrlTraj{(i,i)*} \Big) } \cdot  \\
%     \Bigg(& \frac{\partial G\Big(\hypothesizedStateTraj{(i,i)*}, \hypothesizedCtrlTraj{(i,i)*}\Big) }{\partial  \hypothesizedStateTraj{(i,i)*} } \cdot \frac{\partial \hypothesizedStateTraj{(i,i)*}  }{\partial \thetavec_j  }+
%     \frac{\partial G\Big(\hypothesizedStateTraj{(i,i)*}, \hypothesizedCtrlTraj{(i,i)*}\Big) }{\partial  \hypothesizedCtrlTraj{(i,i)*} } \cdot \frac{\partial \hypothesizedCtrlTraj{(i,i)*}  }{\partial \thetavec_j  }\Bigg)
% \end{aligned}
% \end{equation}
% }
\begin{equation}
\label{eq:loss gradient}
\frac{\partial \loss}{\partial \thetavec_j} \!\!=\!\!
\sum_{i=1}^N 
% \frac{\partial \loss}{\partial G\!\Big(\hypothesizedStateTraj{(i,i)*}\!,\! \hypothesizedCtrlTraj{(i,i)*}\Big)}
\frac{\partial \loss}{\partial G(\zeta^{i*})}
 % \cdot 
\Bigg(
\frac{\partial G}{\partial \hat\zeta^{i*}_{\hypothesizedStateTraj{}}}
\frac{\partial \hat\zeta^{i*}_{\hypothesizedStateTraj{}}}{\partial \thetavec_j}
+
\frac{\partial G}{\partial \hat\zeta^{i*}_{\hypothesizedCtrlTraj{}}}
\frac{\partial \hat\zeta^{i*}_{\hypothesizedCtrlTraj{}}}{\partial \thetavec_j}
\Bigg),
% _{\displaystyle \frac{\partial G\!\Big(\hypothesizedStateTraj{(i,i)*}, \hypothesizedCtrlTraj{(i,i)*}\Big)}{\partial \thetavec_j}}
\end{equation}
where $\hat\zeta^{i*} = (\hat\zeta^{i*}_{\hypothesizedStateTraj{}}, \hat\zeta^{i*}_{\hypothesizedCtrlTraj{}})$
% = (\hypothesizedStateTraj{(i,i)*}, \hypothesizedCtrlTraj{(i,i)*})$ 
is a tuple containing
% $\hypothesizedStateTraj{(i,i)*}$ and $\hypothesizedCtrlTraj{(i,i)*}$ represent
player $\playeridx$'s own LGNE state and control trajectories in its own hypothesized game $\Gamma(\GtParamsguess{i})$.
% 's , respectively.
Using \cref{eq:loss gradient}, we solve the level-2 inverse game by minimizing $\loss(\GtParamsguess{})$ in \cref{eq:level-2 log-likelihood-inverse-objective} using gradient descent.
% to solve the level-2 inverse game.
The additive structure of \cref{eq:loss gradient} enables efficient parallel computation.
% over each agent.
\vspace{-12pt}

% \todo{Describe what $\loss$ is and how we solve using these gradients?}

% \hmzh{Shorten + repeated below.}
% We note that the gradient estimator in equation \eqref{eq:loss gradient} can be efficiently computed in parallel over different agents $j\in[N]$ due to the additive structure. 
% We  leverage this gradient estimator and formally present the inverse problem associated with this level-2 game in the following section. 

% {
% \color{red}
% After reparameterizing $\GtParamsguess{}$ as a vector $\thetavec$, we obtain the gradient of $\loss(\GtParamsguess{})$ with respect to an arbitrary element $\thetavec_j$:
% \begin{equation}
%     \frac{\partial \loss }{\partial \thetavec_j} = \frac{\partial \loss  }{\partial  (G (\xstate{t}{i,i*})) } \cdot  \frac{\partial G(\xstate{t}{i,i*}) }{\partial  \xstate{t}{i,i*} } \cdot \frac{\partial \xstate{t}{i,i*}  }{\partial \thetavec_j  }
% \end{equation}

% }

\begin{figure}[!b]
    \centering
    \includegraphics[width=0.9\linewidth]{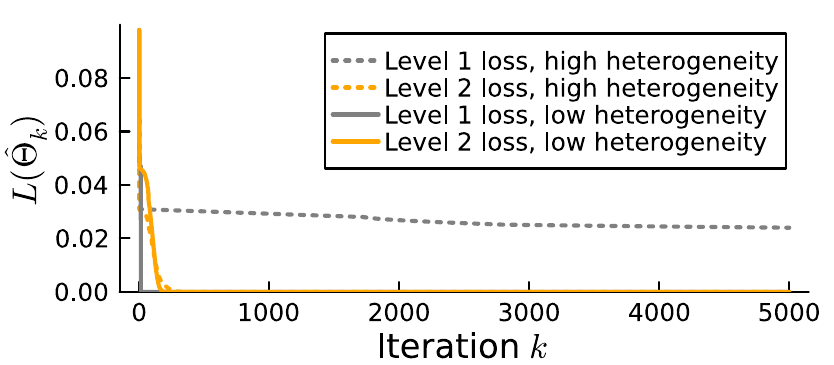}\vspace{-0.4em}
    \caption{
    Level-2 inference achieves lower inverse game loss values than level-1 inference when observed agents' estimates of others' objectives differ significantly. 
    %\david{Can we use the proper math symbols on the vertical axis? would also rm the plot title, and make the plot full column width} 
    We use our method to solve the level-2 inverse game, with further details provided in Supplement C \cite{khan2025agentsthinkdolevel2}. 
    % \todo{TODO: \cite{khan2025agentsthinkdolevel2}}
    % the supplementary material. 
    %When agents’ estimates of the game objectives differ significantly, for instance, in the dotted curve scenario characterized by a large value of $\sum_{j=1}^N \|M(\GtParamsguess{j}) - \check{M}\|_2$, level-2 inference performance much better than level-1 inference, which struggles to minimize the loss function $\loss(\LevelOneGtParamsguess^*)$. %Although level-1 inference performs relatively well when agents’ true parameters $\GtParamsguess{i}$ are similar (solid grey), level-2 inference consistently achieves low loss values and maintains strong performance.
    %\todo{Inference results for an LQ game. \emph{Level-2 inference on data generated from fictitious play on this game is able to identify the correct (optimal) parameters, whereas level-1 inference is not.}}
    % \vspace{-12pt}
    % \todo{add description alt text}
    % \Description{Level-2 inference achieves lower inverse game loss values than level-1 inference when observed agents' estimates of others' objectives differ significantly.}
    }
    \label{fig:lq-experiment}
\end{figure}

\begin{figure*}[!t]
\centering

\begin{minipage}[t]{0.58\textwidth}
\vspace{0pt}
\centering
\subfloat[Aligned parameters; lane change succeeds safely.\label{fig:forward-lc-mismatched-safe}]{
    \includegraphics[width=0.95\linewidth]{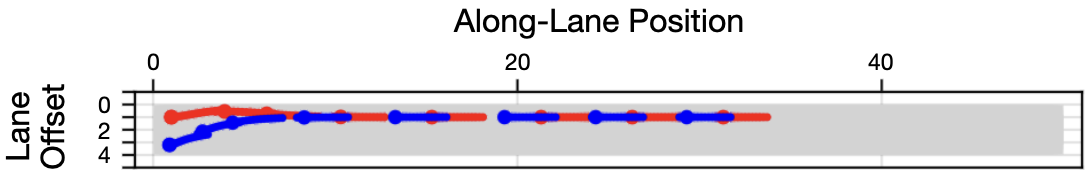}
}\\[7pt]
\subfloat[Mismatched parameters, deadlocked lane change.\label{fig:forward-lc-mismatched-deadlock}]{
    \includegraphics[width=0.95\linewidth]{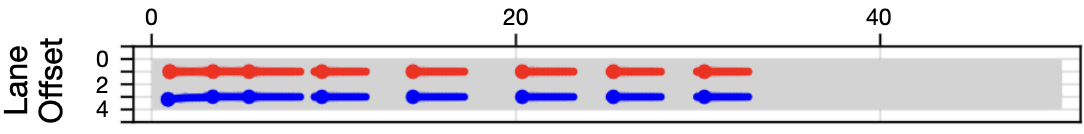}
}
\end{minipage}\hfill
\begin{minipage}[t]{0.40\textwidth}
\vspace{0pt}
\centering
\subfloat[Time to successful lane change.\label{fig:lane-change-times}]{
    \includegraphics[width=\linewidth]{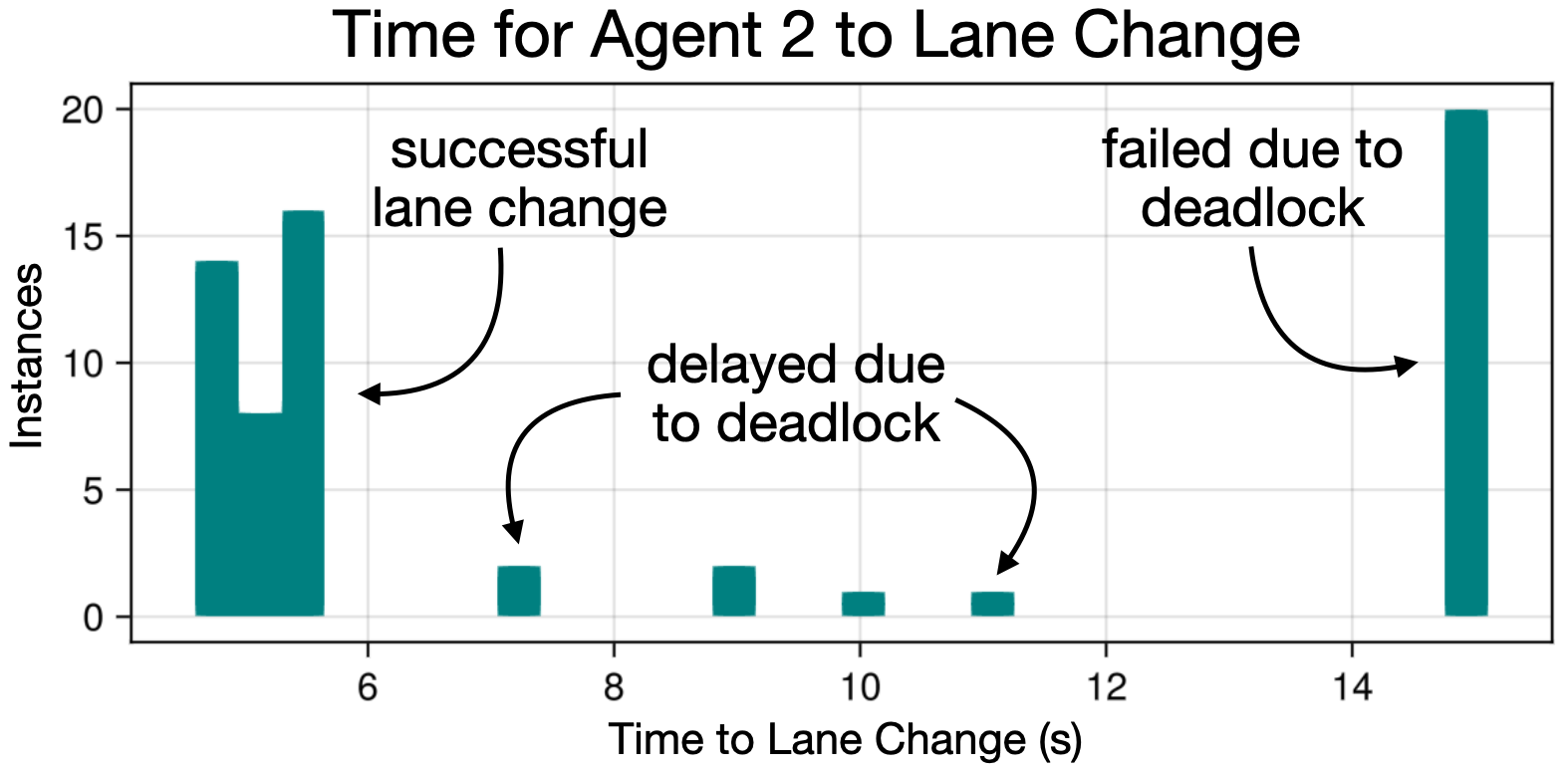}
}
\end{minipage}

\caption{\label{fig:forward-lc-sweep}
\textit{Fictitious play between two agents can lead to unsuccessful interactions under mismatched objective estimates.}
(a) Aligned beliefs yield a safe lane change.
(b) Mismatched beliefs cause deadlock.
(c) Histogram of lane-change completion time across 64 simulations with varying belief mismatch.
}
\vspace{-1em}
% \textit{Fictitious play between two agents frequently, but not always, leads to unsuccessful interaction.}

% Both agents wish to track the center of the top lane (\SI{1}{\meter}).
% When these agents incorrectly estimate each other's objectives, it can lead to a variety of both successful (\cref{fig:forward-lc-mismatched-safe}) and unsuccessful (\cref{fig:forward-lc-mismatched-deadlock}) interactions.
% We simulate 64 instances of fictitious play in which agents' estimates of each other's desired lane offset vary uniformly from \SI{0.5}{\meter} to \SI{4}{\meter}.
% \cref{fig:lane-change-times} shows the time it takes for agent~2 to successfully lane change in each instance.
% In many of these instances, agent~2 successfully changes its lane early in the simulation (\cref{fig:forward-lc-mismatched-safe}); this can occur safely, but it can also occur abruptly or aggressively due to mismatch (not shown).
% In other instances, agent~2's lane change is delayed or fails due to deadlock (\cref{fig:forward-lc-mismatched-deadlock}).
% }
% \label{fig:forward-lc-sweep}
\end{figure*}

\begin{figure*}[!ht]
\centering
\subfloat[Level-1 inference on the lane change in \cref{fig:front-figure}.\label{fig:level-1-front-lc-inference}]{
    \includegraphics[width=0.4\textwidth]{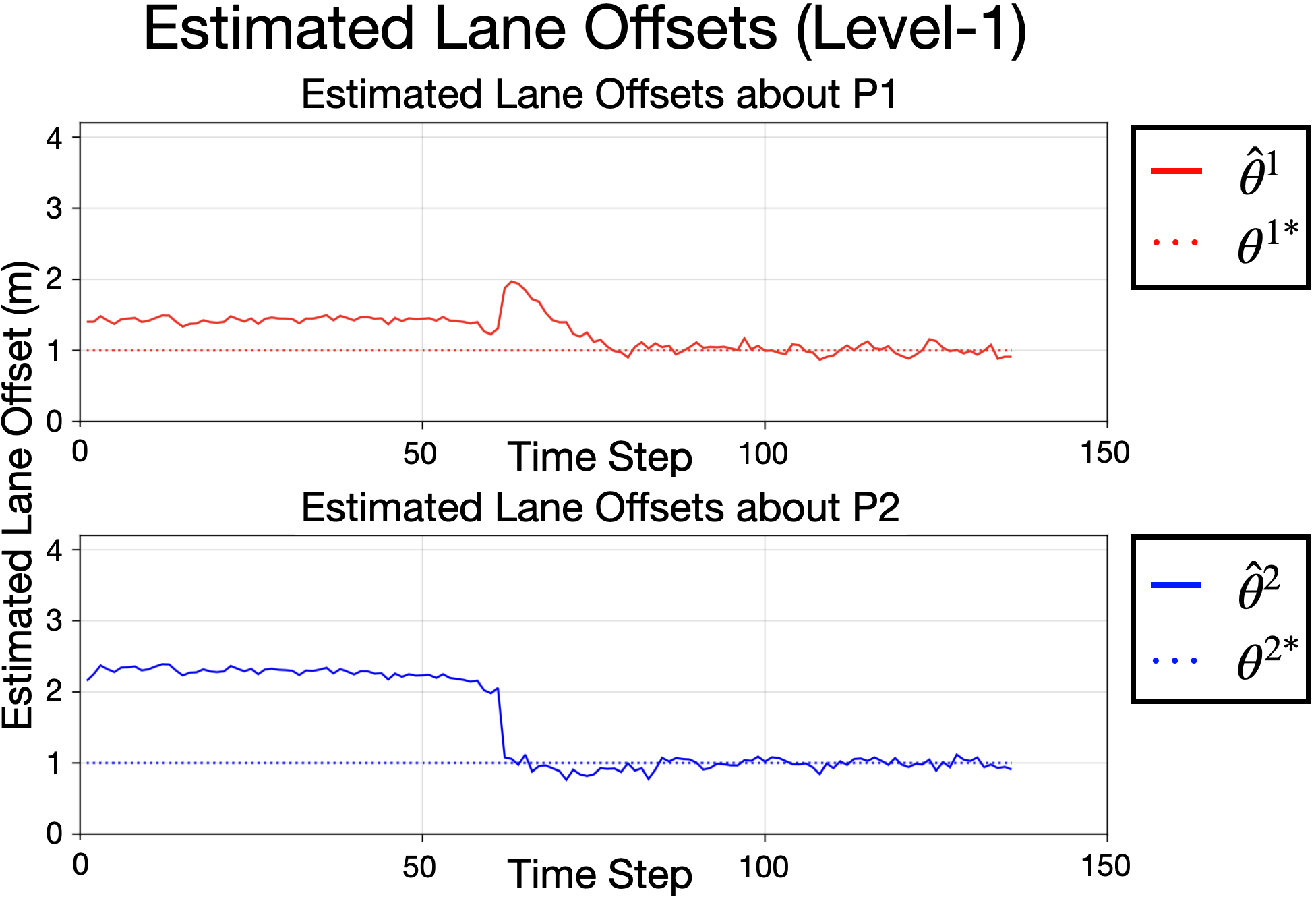}
}%\hfill
\subfloat[Level-2 inference on the lane change in \cref{fig:front-figure}.\label{fig:level-2-front-lc-inference}]{
    \includegraphics[width=0.4\textwidth]{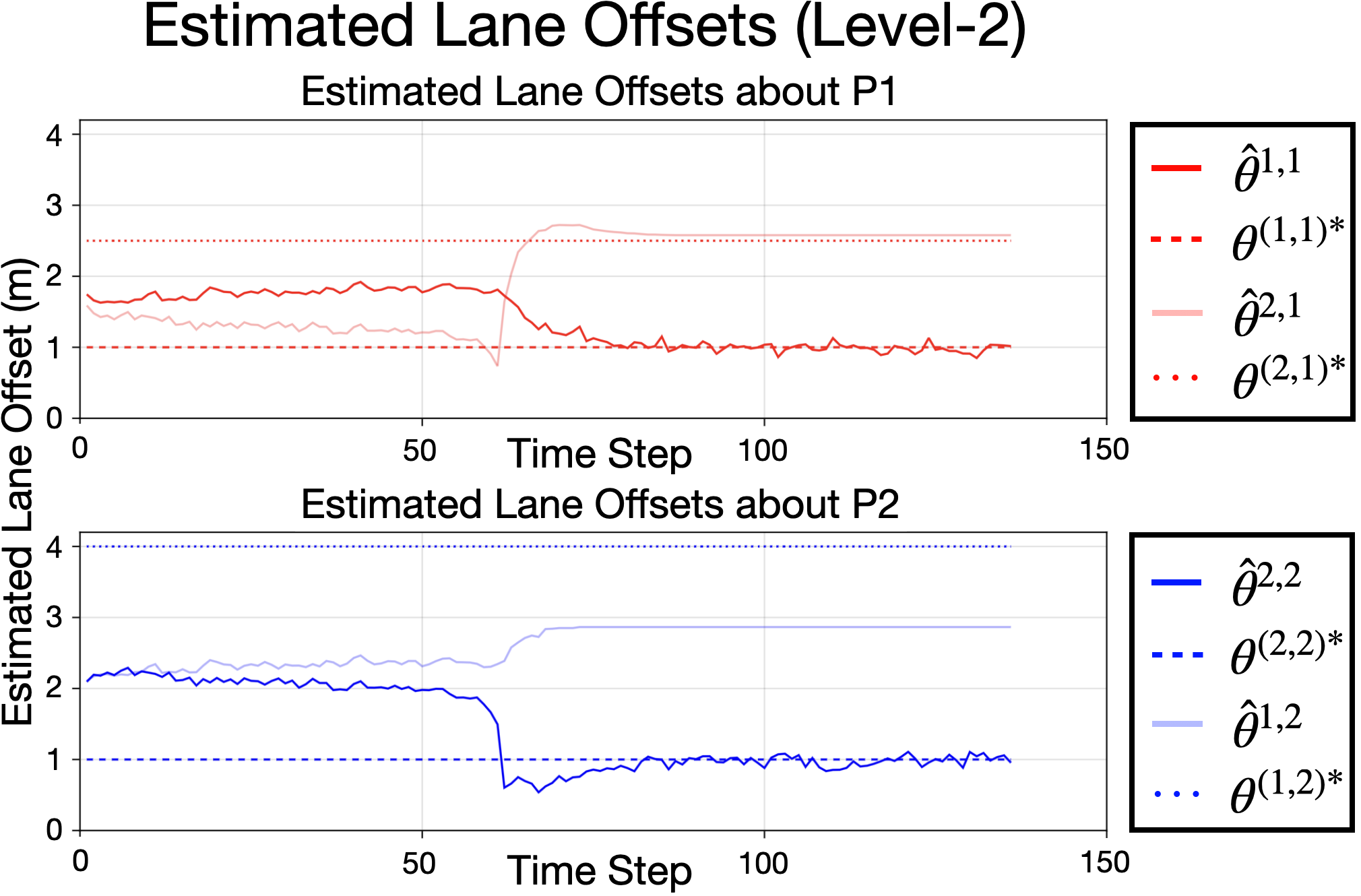}
}

\caption{
\textit{Online parameter estimates in the lane-change scenario.}
Level-1 (\cref{fig:level-1-front-lc-inference}) and level-2 (\cref{fig:level-2-front-lc-inference}) inference are applied to the interaction in \cref{fig:front-figure}. 
Level-2 inference recovers the mismatched belief parameters $\params{\playeridx{},\secondidx{}}$, while level-1 inference estimates only agents' objectives.
% \textit{Level-2 inference infers underlying mismatched objectives $\params{\playeridx{},\secondidx{}}$ from receding horizon gameplay data, while level-1 inference does not.}
% We run level-1 (\cref{fig:level-1-front-lc-inference}) and level-2 (\cref{fig:level-2-front-lc-inference}) inference on the lane change example depicted in \cref{fig:front-figure}, which initially deadlocks but eventually results in a successful lane change.
% Both level-1 and level-2 inference initially identify incorrect estimates for the agents' objective parameters but eventually converge to correct estimates for individual agent objectives during the simulation.
% Level-2 inference produces correct (mismatched) values for estimates of the other agent's objectives.
}
\vspace{-1em}
% \label{fig:lc-inference}
\label{fig:lane changing online}
\end{figure*}

\section{An Efficient Level-2 Inverse Game Algorithm}

We solve the level-2 inverse game using gradient descent on the loss $\loss(\GtParamsguess{})$. 
At each iteration, we compute gradients via implicit differentiation of the MCP solution described in \cref{ssec:forward-param-games-as-mcps} using \cref{eq:loss gradient}, and update the parameter estimate $\GtParamsguess{}$. 
These gradient evaluations can be performed in parallel across agents, and the procedure iterates until convergence or a maximum number of iterations is reached.

Our method supports both offline and online inference.
In the offline setting, we estimate parameters from historical observations $\ObsTraj{}$ by minimizing $\loss(\GtParamsguess{})$, as illustrated in \cref{fig:lq-experiment}.
In the online setting (\cref{fig:lane changing online}), we apply the method within a receding-horizon framework \cite{peters2023learning} to update estimates from streaming data, a more challenging inference setting evaluated in the following section.
To ensure tractability, we assume parameters remain fixed over the observation horizon and handle time variation using receding-horizon updates.

\section{Lane Change Experiments}
\label{sec:experiments}
% \label{ssec:running-example}
We evaluate our method on a two-vehicle lane-change scenario on a two-lane road (Figure~\ref{fig:front-figure}). 
Each vehicle follows planar double-integrator dynamics and seeks to reach 
a target lane while maintaining a desired velocity and avoiding collision. 
Each agent plans its strategy based on its own target lane parameter 
$\gtparams{i,i}$ and its estimate of the other agent’s target lane $\gtparams{i,-i}$. 
Each agent's objective penalizes lane tracking error, velocity tracking error, and 
control effort, and agents enforce a minimum-distance collision constraint. 
In this scenario, a third-party observer estimates parameters 
$\GtParamsguess{\playeridx{}} = \{ \params{\playeridx{}, \playeridx{}}, \params{\playeridx{},-\playeridx{}} \}$ 
for each agent by maximizing the likelihood of the observed trajectories $\ObsTraj{}$ under the resulting LGNE.
Full model details and hyperparameters are provided in Appendix~C~\cite{khan2025agentsthinkdolevel2}.

\subsubsection{Fictitious Play Can Cause Unsafe Behavior}

We consider a scenario in which both agents aim to switch to the lane offset $\nicefrac{l_w}{2}$ (the center of the top lane), but each maintains an incorrect estimate of the other agent’s target lane. 
As a result, the agents plan using different Nash games derived from their misaligned beliefs.
We select true parameters 
$\secondgtparams{*} = \{\secondgtparams{1*}, \secondgtparams{2*}\}$, where 
$\secondgtparams{1*} = \{\theta^{(1,1)*}, \theta^{(2,1)*}\}$ and 
$\secondgtparams{2*} = \{\secondgtparams{(2,1)*}, \secondgtparams{(2,2)*}\}$.
We then simulate fictitious play in game $\Gamma(\secondgtparams{*})$ in a receding-horizon manner.

% We select true parameters $\Theta^* = \{\{\theta_{(1,1)}^*, \theta_{(2,1)}^*\}, \{\theta_{(2,1)}^*, \theta_{(2,2)}^*\}\}$.
% Here $\theta_{(1,1)}^* = \theta_{(2,2)}^* = l_w/2$, and we sweep over a range of belief parameters $\theta_{(1,2)}^*, \theta_{(2,1)}^*$.
% We then simulate fictitious play in $\Gamma(\Theta^*)$ in a receding-horizon manner.

Figure~\ref{fig:forward-lc-sweep} shows representative outcomes. 
Aligned beliefs ($\gtparams{(1,2)*} = \gtparams{(2,1)*} = \nicefrac{l_w}{2}$) produce a successful lane change (\cref{fig:forward-lc-mismatched-safe}). 
Moderate belief mismatches can still yield successful lane changes, but some indicate aggressive maneuvers, as in the left mode of the histogram in \cref{fig:lane-change-times}. 
Larger mismatches often lead to deadlock (\cref{fig:forward-lc-mismatched-deadlock}), while intermediate cases eventually resolve after an initial deadlock (\cref{fig:front-figure}).

These results show that mismatched beliefs can produce unsafe or unsuccessful interactions and motivate the need for level-2 inference, which explicitly models agents’ estimates of one another’s objectives.

\subsubsection{Level-2 Inference Identifies Mismatched Beliefs}

\Cref{fig:lane changing online} compares level-1 
(\cref{fig:level-1-front-lc-inference}) and level-2 
(\cref{fig:level-2-front-lc-inference}) inference on the scenario in \cref{fig:front-figure}. 
Early in the simulation, both methods converge to similar incorrect 
parameter estimates near the road center at \SI{2}{\meter}, 
$\params{\playeridx{}} \approx \params{\playeridx{},\playeridx{}} 
\approx \params{-\playeridx{},\playeridx{}}$. 
The estimate $\params{2,1}$ gradually moves toward the first lane center 
at \SI{1}{\meter} until around time step 60. 
Over the interaction, the estimates converge to the true parameters, 
yielding $\params{\playeridx{}} \approx \params{\playeridx{},\playeridx{}} = 1$.

Level-2 inference additionally estimates the belief parameters 
$\params{1,2}$ and $\params{2,1}$, revealing that each agent initially 
believes the other aims for the road center, which explains the observed 
deadlock. Although this explanation is incorrect, the nonconvexity of the 
level-2 inference problem (\Cref{prop:nonconvex main text}) means that 
multiple local solutions are expected.

When agent 2 begins its lane change, the estimate $\params{2,1}$ quickly 
converges to the true mismatched value $\gtparams{(2,1)*}$. The estimate 
$\params{1,2}$ does not fully converge to $\gtparams{(1,2)*}$ because once 
the agents separate, the equilibrium becomes insensitive to these parameters 
(\cref{rmk:gradients}).
Nevertheless, level-2 inference correctly identifies 
that agent 1's behavior is driven by a belief that agent 2 prefers the bottom lane.
These results show that level-1 inference cannot capture behaviors arising 
from mismatched beliefs, while level-2 inference can explicitly recover 
agents’ estimates of one another’s objectives.

% \newpage
\section{Conclusion}

We introduce a level-2 inference framework that relaxes the common assumption in inverse game-theoretic approaches that agents share complete knowledge of one another's objectives. 
Our theoretical and empirical analyses demonstrate that level-1 inference produces significant prediction errors when agents maintain heterogeneous estimates of others' objectives.
We develop an efficient gradient-based algorithm that infers these misaligned beliefs by formulating level-2 dynamic games as differentiable mixed complementarity problems. 
Experiments on LQ games and a synthetic urban driving scenario demonstrate that the proposed method identifies misalignments that simpler level-1 inference approaches fail to detect.
Future work includes applying level-2 inference to strategic interactive planning, extending the framework to nonlinear stochastic games, and formally characterizing the observability of level-2 parameters. 
An open challenge is developing methods that reason about competing hypotheses explaining observed strategic interactions.

\bibliographystyle{ieeetr}
\footnotesize
\bibliography{references,lab_references}

% UNCOMMENT FOR POSTING TO ARXIV

% % \appendix
% \newpage
% % ~
% % \newpage
\section*{Supplementary Material}
\normalsize
\label{sec:supplementary-appendix}
In Supplement A, we include the required definitions and proofs for \Cref{prop:nonconvex main text} and \Cref{prop:bounds}. 
% \todo{A full table of all variables.}
% , as well as required definitions for them.
We include details of the experiments on the LQ game in Supplement B, and details of the lane change scenario in Supplement C.
Lastly, we include tables that organize and document every variable introduced in the main document in Supplement D.

\section{Supplement A: A Compact Representation of the KKT Conditions for the LQ Game}
In this section, we derive a compact linear representation for the KKT system of the LQ game used in the proofs of Proposition \ref{prop:nonconvex main text} and Proposition \ref{prop:bounds}.

% In a $N$-agent, $T$-horizon linear-quadratic game, the $i$-th player considers the stage cost
% \begin{equation}
%     \ell_t^i(x,u) = \frac{1}{2} x_{t+1}^\top Q^i x_{t+1} + \frac{1}{2} u_t^\top R^i u_t
% \end{equation}
% and the whole horizon cost is defined as $L^i(x,u) = \sum_{t=0}^T \ell_t^i(x,u)$. The overall system dynamics, which include all players, is defined as
% \begin{equation}
%     x_{t+1} = A x_t + B u_t
% \end{equation}
% Denote by $x:= [x_t]_{t=0}^{T+1}$, $u:=[u_t^i]_{i=1,t=0}^{N, T}$, and $\lambda := [\lambda_t^i]_{i=1,t=0}^{N,T}$. Consider the Lagrangian
% \begin{equation*}
%     \mathcal{L}_t^i(x,u,\lambda): = x_{t+1}^\top Q^i x_{t+1} + \frac{1}{2} u_t^\top R^i u_t + \lambda_t^{i\top} (x_{t+1} - A x_t - B u_t)
% \end{equation*}
% and 
% \begin{equation}
%     \mathcal{L}^i(x,u,\lambda) : = \sum_{t=0}^T \mathcal{L}_t^i(x,u,\lambda)
% \end{equation}
% \begin{definition}
%     An open-loop policy $\pi^*: x_0 \to u$ is an open-loop Nash equilibrium if
%     \begin{equation}
%     \begin{aligned}
%         & L^i(x, \pi^{1*}(x_0), \cdots, \pi^{i}(x_0),\cdots, \pi^{N*}(x_0) ) \\ 
%         & \ge L^i(x,\pi^{1*}(x_0), \cdots, \pi^{i*}(x_0),\cdots, \pi^{N*}(x_0) ).
%     \end{aligned} 
%     \end{equation}
% \end{definition}
In an $N$-agent, $T$-horizon LQ game, the $i$-th agent considers the quadratic stage cost
\begin{equation}
    \label{eq:exp-lq-cost}
    \stagecostfn{t}{i}(\xstate{t}{}, \ctrl{t}{} ;\gtparams{i}) = \frac{1}{2} {\xstate{t}{}}^\top Q({\gtparams{i}}) \xstate{t}{} + \frac{1}{2} {\ctrl{t}{i}}^\top R^i \ctrl{t}{i}
\end{equation}
where the entries of the positive semidefinite matrix $Q({\gtparams{i}})$ are parameterized by the vector $\gtparams{i}$, and $R_t^i$ is a fixed positive definite matrix. 
%We define the total cost as $\costfn{\playeridx}(\StateTraj{}, \CtrlTraj{}; \gtparams{\playeridx})  = \sum_{t=1}^T \ell_t^i(x,u;\gtparams{i})$. 
Each agent $\playeridx$'s state dynamics are given by %, its dynamics $\dynamics{i}$ is
\begin{equation}
    \label{eq:exp-lq-dyn}
    \xstate{t+1}{i} = \LinearKKTA{}^{\playeridx} \xstate{t}{i} + \LinearKKTB{}^{\playeridx} \ctrl{t}{i}. 
\end{equation}
% As noted by \cref{eq:lagrangian}, f
To formulate the Lagrangian for each agent $i$, we associate the initial condition constraint $0=\xstate{1}{i}-\initstate{i}$ with a Lagrange multiplier $\dynamicsInitLagrange{i}$, and at each step $t$, we associate a Lagrange multiplier $ \dynamicsLagrange{t}{i}$ with dynamics constraint \cref{eq:lq-dyn}.
For clarity, we denote $\DynamicsInitLagrange{}\equiv\{\dynamicsInitLagrange{i}\}_{i=1}^N$, $\dynamicsLagrange{t}{}\equiv\{\dynamicsLagrange{t}{i}\}_{i=1}^N$, and $\DynamicsLagrange{}\equiv\{\dynamicsLagrange{t}{}\}_{t=1}^{T-1}$.
For an LQ game, agent $i$'s Lagrangian can be defined as
\begin{equation}
    \begin{aligned}
        \LQlagrangian{i} (\StateTraj{}, \CtrlTraj{},\DynamicsInitLagrange{}, \DynamicsLagrange{};\gtparams{\playeridx}) \equiv & \costfn{i}(\StateTraj{}, \CtrlTraj{};\gtparams{\playeridx})+ \dynamicsInitLagrange{i} (\xstate{1}{i} - \initstate{i})+\\ 
        & \sum_{t=1}^{T-1} \dynamicsLagrange{t}{i}(\xstate{t+1}{i} - \LinearKKTA{}^i \xstate{t}{i} - \LinearKKTB{}^i \ctrl{t}{i}) .
    \end{aligned}
\end{equation}
Thus, we can derive LQ games' KKT conditions
% , as in \cref{eq: LQ KKT compact}, 
as follows:
\begin{equation}\label{eq: LQ KKT early}
    \forall i \in [N]\left\{\begin{aligned}
        & 0=\nabla_{( \StateTraj{}, \CtrlTraj{i}) } \LQlagrangian{i} (\StateTraj{}, \CtrlTraj{},\DynamicsInitLagrange{}, \DynamicsLagrange{};\gtparams{\playeridx})  ,  \\
        & 0=\xstate{t+1}{i}-\LinearKKTA{}^i\xstate{t}{i} - \LinearKKTB{}^i\ctrl{t}{i}, \forall t\in[T-1],  \\
        & 0=\xstate{1}{i}- \initstate{i}
    \end{aligned}\right.
\end{equation}
% We define the KKT condition of GLNE as
% \begin{equation}\label{eq: LQ KKT}
%     \begin{aligned}
%         & 0=\nabla_{x_{t+1}} \mathcal{L}^i(x, u, \lambda ), && \forall i\in[N] , \forall t \in [T]\\
%         & 0=\nabla_{u_t^i} \mathcal{L}^i(x, u, \lambda), && \forall i\in[N] , \forall t \in [T] \\
%         & 0=x_{t+1}-Ax_t - Bu_t, && \forall t\in[T]
%     \end{aligned}
% \end{equation}
To facilitate our analysis, we will compactly represent the KKT conditions as a linear equation, which requires defining the following matrices:
    \begin{equation*}
    \begin{aligned}
        Q(\gtparams{}) & \equiv \begin{bmatrix}Q(\gtparams{1}) \\ Q(\gtparams{2}) \\ \vdots \\ Q(\gtparams{N})  \end{bmatrix}, \bar{Q}(\gtparams{}) \equiv \begin{bmatrix}
            Q(\gtparams{}) & 0 & \cdots & 0 \\
            0 & Q(\gtparams{}) & \cdots & 0 \\
            \vdots & \vdots & \ddots & 0 \\
            0 & 0 & \cdots & Q(\gtparams{})
        \end{bmatrix},\\
            R&\equiv \begin{bmatrix}
                R^{1} & 0 & \cdots & 0 \\
                0 & R^{2} & \cdots & 0 \\
                \vdots & \vdots & \ddots & \vdots \\
                0 & 0 & \cdots & R^{N}
            \end{bmatrix}, \bar{R} := \begin{bmatrix}
            R & 0 & \cdots & 0 \\
            0 & {R} & \cdots & 0 \\
            \vdots & \vdots & \ddots & \vdots \\
            0 & 0 & \cdots & {R}
        \end{bmatrix}, \\
        \LinearKKTBarA{} & \equiv\left[\begin{array}{c c c c c  c}
            -\LinearKKTA{}^1 & I & 0 & \cdots & 0 & 0 \\
            0 & -\LinearKKTA{}^2 & I &  \cdots & 0 & 0 \\
            0 & 0 & -\LinearKKTA{}^3 & \cdots & 0  & 0\\
            \vdots & \vdots & \vdots & \ddots & \vdots & \vdots  \\
            0 & 0 & 0 & \cdots & -\LinearKKTA{}^N & I
        \end{array}\right],  \\
        \LinearKKTBarB{} & \equiv - \left[ \begin{array}{c c c c c c}
            \LinearKKTB{}^1 & 0 & 0 & \cdots & 0 & 0 \\
            0 & \LinearKKTB{}^2 & 0 & \cdots & 0 & 0 \\
            0 & 0 & \LinearKKTB{}^3 & \cdots & 0 & 0\\
            \vdots & \vdots  & \vdots & \ddots & \vdots & \vdots  \\
            0 & 0 & 0 & \cdots & \LinearKKTB{}^N & 0
        \end{array} \right],\\
        \LinearKKTBarC{}& \equiv \begin{bmatrix}
            I & 0 & 0 & \cdots & 0 & 0
        \end{bmatrix}.
    \end{aligned}
    \end{equation*}
    % \begin{equation*}
    %     \bar{B} := - \left[ \begin{array}{c c c c c c}
    %         B & 0 & 0 & \cdots & 0 & 0 \\
    %         0 & B & 0 & \cdots & 0 & 0 \\
    %         0 & 0 & B & \cdots & 0 & 0\\
    %         \vdots & \vdots  & \vdots & \ddots & \vdots & \vdots  \\
    %         0 & 0 & 0 & \cdots & B & 0
    %     \end{array} \right].
    % \end{equation*}
    In addition, we define
    \begin{equation}
        \LinearKKTBlkDiagA{} \equiv \begin{bmatrix}
            \LinearKKTA{}^1 & 0 & \cdots & 0  \\
            0 & \LinearKKTA{}^2 & \cdots & 0  \\
            \vdots & \vdots & \ddots & \vdots \\
            0 & 0  & \cdots & \LinearKKTA{}^N
        \end{bmatrix}, \LinearKKTBlkDiagB{} \equiv \begin{bmatrix}
            \LinearKKTB{}^1 & 0 & \cdots & 0 \\
            0 & \LinearKKTB{}^2 & \cdots & 0 \\
            \vdots & \vdots & \ddots & \vdots \\
            0 & 0 & \cdots & \LinearKKTB{}^N
        \end{bmatrix}
    \end{equation}
    and 
    \begin{align}
        \LinearKKTBigBlkDiagA{} &\equiv \left[\begin{array}{c c c c c  c}
            -\LinearKKTBlkDiagA{}^\top & 0 & 0 & \cdots & 0 \\
            I &  -\LinearKKTBlkDiagA{}^\top &  &  \cdots & 0  \\
            0 & I & -\LinearKKTBlkDiagA{}^\top & \cdots & 0  \\
            \vdots & \vdots & \vdots & \ddots & \vdots   \\
            0 & 0 & 0 & \cdots & I
        \end{array}\right],\LinearKKTBigBlkDiagC{} \equiv \begin{bmatrix}
            I \\ 0 \\ 0 \\ \vdots \\ 0
        \end{bmatrix}, \\
        \LinearKKTBigBlkDiagB{} &\equiv - \left[ \begin{array}{c c c c c c}
            \LinearKKTBlkDiagB{}^\top & 0 & 0 & \cdots & 0  \\
            0 & \LinearKKTBlkDiagB{}^\top & 0 & \cdots & 0  \\
            0 & 0 & \LinearKKTBlkDiagB{}^\top & \cdots & 0 \\
            \vdots & \vdots  & \vdots & \ddots & \vdots  \\
            0 & 0 & 0 & \cdots  & \LinearKKTBlkDiagB{}^\top \\
            0 & 0 & 0 & \cdots  & 0
        \end{array} \right].
    \end{align}
    % We also define $\LinearKKTBigBlkDiagC{} \equiv \begin{bmatrix}
    %         I & 0
    %     \end{bmatrix}$ with appropriate dimensions to fit in the matrix $M(\GtParams{})$. 
    % \begin{equation}    
        % \begin{bmatrix}
        %     \bar{C} & 0 & \cdots & 0 \\
        %     0 & \bar{C} & \cdots & 0 \\
        %     \vdots & \vdots & \ddots & \vdots \\
        %     0 & 0 & \cdots & \bar{C}
        % \end{bmatrix}
    % \end{equation}
    We can compactly write the KKT conditions
    \begin{equation}
        \underbrace{\left[\begin{array}{c c  c c }
            \bar{Q}({\GtParams{}}) & 0 & \LinearKKTBigBlkDiagA{} & \LinearKKTBigBlkDiagC{} \\
            0 & \bar{R} & \LinearKKTBigBlkDiagB{} & 0 \\  
            \LinearKKTBarA{} & \LinearKKTBarB{} & 0 & 0 \\
            \LinearKKTBarC{} & 0 & 0 & 0
        \end{array}\right]}_{M(\GtParams{})} \underbrace{\begin{bmatrix}
            \StateTraj{} \\ \CtrlTraj{} \\ \DynamicsLagrange{} \\ \DynamicsInitLagrange{}
        \end{bmatrix}}_{\LQtotalvariable} + \underbrace{\begin{bmatrix}
            0 \\ 0\\ 0 \\ -I
        \end{bmatrix}}_{S} \initstate{} = 0
    \end{equation} 
    where the first two rows correspond to the first row of \eqref{eq: LQ KKT early}, and each remaining row corresponds to one of the subsequent rows in \eqref{eq: LQ KKT early}. 
    % where each row corresponds to a row in \eqref{eq: LQ KKT}

%     % Let $n_z=T(n_x + n_u + n_\lambda)$ be the total dimension of the all variables we considered in the KKT conditions. 
%     Throughout this section, we assume that there is a unique LGNE and that $M(\GtParams{})$ is invertible. Precise conditions ensuring the uniqueness of the solution and invertibility of $M(\GtParams{})$ can be found in \cite[Section 6.2]{bacsar1998dynamic}. %The matrix $M(Q)$ would be invertible under the condition that all state cost matrices $Q^i$ and control matrices $R^i$ are positive definite. 
    
%     % To compute the player $i$'s LGNE control under its belief $\{\hat{Q}^{ij}\}_{j=1}^N$, we have
%     % \begin{equation}
%     %     M(\hat{Q}^i) z^i + K x_0 = 0
%     % \end{equation}
%     % where $\hat{Q}^i = \{\hat{Q}^{ij}\}_{j=1}^N$ and $z^i\in\mathbb{R}^{n_z}$.
%     % Similarly, we compute the LGNE solution $z'$ under level-1 inference as 
%     % \begin{equation}
%     %     M(Q') z' + K x_0 = 0
%     % \end{equation}
%     % In this subsection, we simplify the inverse-game loss function as the error between the predicted OLNE control under the ground truth second-order belief and the one predicted using level-1 belief. 
%     % \begin{equation}
%     %     L(Q) = \sum_{t=0}^T \|\pi(x_0|Q^*) - \pi(x_0|Q)\|_2^2
%     % \end{equation}
    
% % We will first characterize the solution space of level-1 and level-2 OLNE inference. 
% % \begin{proposition} \label{prop:nonconvex}
% %     Minimizing the level-2 inverse game loss function in the level-2 belief space is a non-convex problem.
% % \end{proposition}
\begin{proof}[Proof of Proposition~\ref{prop:nonconvex main text}]
    We approach the proof by presenting a counterexample complying with the conditions in Proposition~\ref{prop:nonconvex main text}'s statement, showing that the convex combination of two parameters $\GtParamsguess{}_1$ and $\GtParamsguess{}_2$ does not necessarily lead to a lower level-2 inverse game loss. 
    
    Suppose $A = I_2$, $B^i = I_2$, $R^i = I_2$, $\forall i\in\{1,2\}$. Let $T=2$, and $x_0=[1,-1]^\top$.  Consider the ground truth level-2 parameters $ \GtParamsguess{1*}\equiv \{\params{1*}, \params{1,2*}\}$ and $ \GtParamsguess{2*}\equiv \{ \params{2*}, \params{2,1*} \} $ are such that: 
    \begin{equation}
        \begin{aligned}
            & Q( \params{1*} )  =   \begin{bmatrix}
        0.1 & 0 \\ 0 & 0.1
    \end{bmatrix}, && Q(\params{1,2*})  =\begin{bmatrix}
        1 & 0 \\ 0 & 1
    \end{bmatrix}\\
    & Q(\params{2,1*})  = \begin{bmatrix}
        0.1 & 0.1 \\ 0.1 & 0.1
    \end{bmatrix},  &&Q(\params{2*}) =\begin{bmatrix}
        1 & 1 \\ 1 & 1
    \end{bmatrix}
        \end{aligned}
    \end{equation}
     Let one set of solutions, $\GtParamsguess{}_1=\{\GtParamsguess{1}_1, \GtParamsguess{2}_1\}$, $ \GtParamsguess{1}_1 =\{ \params{1}_1,\params{1,2}_1  \} $ and $\GtParamsguess{2}_1 = \{\params{2}_1, \params{2,1}_1\}$ be such that
    \begin{equation*}
    \begin{aligned}
        &Q(\params{1}_1)  = \begin{bmatrix}
            1 & 0.8 \\ 0.8 & 1
        \end{bmatrix}, && Q(\params{1,2}_1)  = Q(\params{1,2*}) , \\
        &Q(\params{2,1}_1) = Q(\params{2,1*}), && Q(\params{2}_1) = Q(\params{2*})
    \end{aligned}
    \end{equation*}
    and another set of solutions, $\GtParamsguess{}_2 =\{\GtParamsguess{1}_2, \GtParamsguess{2}_2\}$, $\GtParamsguess{1}_2 = \{ \params{1}_2, \params{1,2}_2\}$, and $ \GtParamsguess{2}_2 = \{ \params{2}_2, \params{2,1}_2 \} $ be such that
    \begin{equation*}
    \begin{aligned}
        & Q(\params{1}_2)) = \begin{bmatrix}
            1 & -0.8 \\ -0.8 & 1
        \end{bmatrix}, && Q(\params{1,2}_2) = Q(\params{1,2*}), \\ & Q(\params{2,1}_2) = Q(\params{2,1*}), &&Q(\params{2}_2)=Q(\params{2*})
    \end{aligned}
    \end{equation*}
    We have $\loss( \GtParamsguess{}_1 )=0.0612$, $\loss ( \GtParamsguess{}_2 ) = 0.6025$, and $\loss (\frac{1}{2}( \GtParamsguess{}_1 +  \GtParamsguess{}_2 ))=0.4041$. However, $\frac{1}{2} (\loss ( \GtParamsguess{}_1 ) + \loss ( \GtParamsguess{}_2 )) < \loss(\frac{1}{2}( \GtParamsguess{}_1 +  \GtParamsguess{}_2 )) $, and this shows the function $\loss ( \GtParamsguess{} )$ is non-convex with respect to $ \GtParamsguess{}$. 
\end{proof}

\begin{proof}[Proof of Proposition~\ref{prop:bounds}]
    We first prove the upper bound. Denote by $\check{\LQtotalvariable}$ the solution to the KKT conditions $M(\constructedlevelonesol) \check{\LQtotalvariable} + S \initstate{}=0$. Let $\delta \LQtotalvariable^j := \check{\LQtotalvariable} - \LQtotalvariable^j$ and $\Delta^j := M(\constructedlevelonesol  ) - M( \secondgtparams{j*} )$. Since $\constructedlevelonesol$ is a constructed solution to the level-1 inference, we have %Define one candidate level-1 inference solution $Q':=[Q^{ii*}]_{i=1}^N$. 
    \begin{equation}
    \begin{aligned}
        \loss(\LevelOneGtParamsguess^*) & \le \loss( \constructedlevelonesol ) \le \frac{1}{2} \sum_{j=1}^N\left\| \delta \LQtotalvariable^j \right\|_2^2
    \end{aligned}
    \end{equation}
    Note that 
    \begin{equation}
    \begin{aligned}
        \delta \LQtotalvariable^j & = M( \constructedlevelonesol )^{-1} S \initstate{} - \LQtotalvariable^j\\
        & = M( \constructedlevelonesol)^{-1} M(\secondgtparams{j*}) \LQtotalvariable^j - \LQtotalvariable^j\\
        & = M(\constructedlevelonesol)^{-1} ( M(\constructedlevelonesol) + M(\secondgtparams{j*}) - M(\constructedlevelonesol)) \LQtotalvariable^j - \LQtotalvariable^j \\
        & = -M(\constructedlevelonesol)^{-1} \Delta^j \LQtotalvariable^j
    \end{aligned}
    \end{equation}
    and we have 
    \begin{equation}
    \begin{aligned}
        \loss( \constructedlevelonesol) \le& \frac{1}{2}\sum_{j=1}^N \|M(\constructedlevelonesol)^{-1} \Delta^j \LQtotalvariable^j\|_2^2 \\
        \le & \frac{1}{2}\sum_{j=1}^N \left(\frac{\sigma_{\max}(M( \secondgtparams{j*})-M(\constructedlevelonesol    ))}{\sigma_{\min}(M(\constructedlevelonesol))} \|\LQtotalvariable^j(\secondgtparams{j*})\|_2\right)^2
    \end{aligned}
    \end{equation}
    
    Next, we show the lower bound. We observe that the level-1 inference loss function can be simplified as 
    \begin{equation}
        \loss(\LevelOneGtParamsguess) =\frac{1}{2} \sum_{j=1}^N \left\| E^j \left(M( \secondgtparams{j*} )^{-1} S \initstate{} - M( \LevelOneGtParamsguess )^{-1} S \initstate{}  \right) \right\|_2^2
    \end{equation}
    Assuming $Y = M(\LevelOneGtParamsguess)^{-1}$, and setting the gradient of the above loss over $Y$ to be zero, we have 
    \begin{equation}
        0 = \sum_{j=1}^N M( \secondgtparams{j*} )^{-1} - N Y^*
    \end{equation}
    Solving this equation with respect to $Y$, we can obtain an optimal $Y^*$
    \begin{equation}
        Y^* :=\frac{1}{N}  \sum_{j=1}^N M( \secondgtparams{j*} )^{-1} 
    \end{equation}
    and we have 
    \begin{equation}
        \sum_{j=1}^N \left( M( \secondgtparams{j*})^{-1} S \initstate{} - Y^* S \initstate{} \right) = 0
    \end{equation}
    Observe $\check{M} = (Y^*)^{-1}$, we have
    \begin{equation*}
    \begin{aligned}
        \loss( \LevelOneGtParamsguess) & \ge\frac{1}{2} \sum_{j=1}^N \left\| E^j \left( \check{M}^{-1} S \initstate{} -  M( \secondgtparams{j*})^{-1} S \initstate{} \right) \right\|_2^2 \\
        & = \frac{1}{2}\sum_{j=1}^N \left\| E^j \left( \check{M}^{-1} ( \check{M} + M( \secondgtparams{j*}) - \check{M} )\LQtotalvariable - \LQtotalvariable^j \right) \right\|_2^2 \\
        & = \frac{1}{2}\sum_{j=1}^N \left\| E^j \check{M}^{-1}\left( M( \secondgtparams{j*}) - \check{M} \right) \LQtotalvariable \right\|_2^2\\
        & \ge \frac{1}{2}\sum_{j=1}^N \left(\frac{\|E^j (M( \secondgtparams{j*}) - \check{M})\LQtotalvariable(\secondgtparams{j*})\|_2}{\sigma_{\max} (\check{M})}\right)^2
    \end{aligned}
    \end{equation*}
    % Since $M(Q^{j*})z^j = K x_0$, and $E^j Kx_0 = 0$, we have 
    % \begin{equation}
    %     L_1(\hat{Q}) \ge \frac{1}{2}\sum_{j=1}^N \frac{\|E^j \check{M} z^j\|_2^2}{(\sigma_{\max} (\check{M}))^2}
    % \end{equation}
\end{proof}

\section{Supplement B: LQ Game Experimental Details}
\label{appendix:lq-game-implementation-details}

We set the random seed to 42 for all experiments involving randomness. All experiments are run on a Nov. 2024 Apple MacBook Pro with 24GB of RAM running MacOS Sequoia 15.5. All code runs in Julia v1.11.5.
% with the following dependencies (taken from the \texttt{Project.toml} file included with the code.

% Generate the forward game based on KKT conditions.

% How is GT generated + $Q^*$ matrices?
% How do we define heterogeneity/homogeneity? - done after (18)
% Gradient descent parameters + initial solution

\subsection*{Environment Setup}
We consider a class of two-player finite-horizon linear-quadratic (LQ) games over a discrete-time horizon $\horizon = 3$ steps, as described by \cref{eq:lq-cost,eq:lq-dyn}.
Let one such game have two-dimensional ($\numstates{} = \numctrls{\playeridx{}} = 2$) state and controls $\xstate{t}{}, \ctrl{t}{\playeridx{}} \in \mathbb{R}^2$,
with control cost and dynamics matrices $R^{\playeridx{}} = A =  B^{\playeridx{}} = I_{2\times 2} $.
The initial state $\initstate{}{} = [10 ~ 10]^{\T}$ is fixed for each experiment.
As with our theoretical derivation, we assume no noise in this game.

% Q^{\playeridx{}} =

\subsubsection{Ground Truth Fictitious Gameplay}
We define game parameters 
\begin{equation}
    \secondgtparams{i} = [\underbrace{\gtparams{(i,1)}~\gtparams{(i,2)}~\gtparams{(i,3)}}_{\gtparams{i, 1}}~\underbrace{\gtparams{(i,4)}~\gtparams{(i,5)}~\gtparams{(i,6)}}_{\gtparams{i, 2}}] \in \mathbb{R}^6,
\end{equation} where parameterized state costs are written as
\begin{align}
    Q(\gtparams{i, 1}) &= \left[ \begin{array}{cc}
        \gtparams{(i, 1)} & \gtparams{(i,3)} \\
        \gtparams{(i,3)} & \gtparams{(i,2)}
    \end{array} \right],\\
    Q(\gtparams{i, 2}) &= \left[ \begin{array}{cc}
        \gtparams{(i,4)} & \gtparams{(i,6)} \\
        \gtparams{(i,6)} & \gtparams{(i,5)}
    \end{array} \right].
\end{align}
Given a hyperparameter $s \in \mathbb{R}$ which scales the level of heterogeneity, we define the following parameters for each agent ($s$ is underlined for clarity):
\begin{align}
    \secondgtparams{1} &= [1, ~1,~-\!\!1,~10,~10\underline{s},~1],\\
    \secondgtparams{2} &= [10,~10,~1,~1,~1,~-1].
\end{align}
We measure heterogeneity using the equation
\begin{equation}
    \sum_{i=1}^N\|M( \secondgtparams{i*} ) - \check{M}\|_2,
\end{equation}
as described in the main text.
This equation measures the difference between each objective estimate matrix and the average objective estimate matrix across all agents.

% \subsection*{Ground Truth Gameplay Generation}

% To generate observed behavior, we solve the Karush-Kuhn-Tucker (KKT) system corresponding to each agent’s optimization problem. The full system is defined in \cref{eq:lq-kkt-conditions}. 
% Solving this system (using \cref{eq: LQ kkt equation}) yields the equilibrium strategies each agent plays given its internal beliefs \( \Params{}^\playeridx{} \). 
% These strategies are used to simulate trajectories in forward play.
% Ground truth gameplay is generated using the state cost matrices
% \todo{Add matrices}.

\subsection*{Inverse Game Solver and Optimization}

We minimize the inverse loss function \( \mathcal{L}(\hat{\Theta}) \) from \cref{eq:lq-inverse-game-loss} using gradient descent.
Gradients are computed via implicit differentiation through the KKT system using \texttt{ForwardDiff.jl} \cite{RevelsLubinPapamarkou2016}.
To perform offline level-2 inference, we solve the KKT conditions and apply a line search strategy, as described by \cite[\S 3]{wright1999numerical} at each step to select the step size.
We initial the guess for $Q^1_0$ and $Q^2_0$ to be identity $I_{2}$.
% \todo{Update notation}
At every iteration of gradient descent, we project each $Q^{i,j}$ matrix to be positive semi-definite
% All cost matrices \( Q^i \) are constrained to be positive semi-definite (PSD) 
to ensure convexity of each player’s objective. 
We only consider cases where the KKT matrix \( M(\Theta) \) is invertible, guaranteeing a unique LGNE.

% using \cref{alg:level-2 inference}
In the process of inferring the level-2 parameters, we utilize optimization hyperparameters
\begin{align}
K &= 5000 && \text{(maximum iterations),} \\
\beta &= 0.5 && \text{(step size decay parameter), and} \\
\alpha &= 10 && \text{(initial learning rate).}
\end{align}

% \subsection*{Observation Model}

% We assume full observation of each agent’s action trajectory:
% \[
% \obs{t}{\playeridx{}} = \ctrl{t}{\playeridx{}} + \epsilon^i_t, \quad \epsilon^i_t \sim \mathcal{N}(0, \sigma^2 I),
% \]
% with fixed noise level \( \sigma \). The loss function compares observed and predicted actions under this model.

\section{Supplement C: Lane Change Experimental Details}
\label{appendix:lc-implementation-details}

\subsection{Environment Setup}
We consider a two-lane road with lane width $l_w = \SI{2.0}{\meter}$.
The road spans from $0$ to $\SI{4.0}{\meter}$ in the lateral direction.

We model each %agent
agent's state as evolving in discrete time with sampling interval $\Delta t$ according to planar double-integrator dynamics $\dynamics{\playeridx{}}$, i.e. % in the longitudinal and lateral coordinates, 
\begin{equation}
\label{eq:exp-planar-di-dynamics}
\xstate{t+1}{\playeridx}
= \left[ \begin{array}{c}
    p^{\playeridx}_{t+1} \\
    v^{\playeridx}_{t+1} \\
\end{array} \right]
= \left[ \begin{array}{c}
    p^{\playeridx}_{t} + \Delta t \cdot v^{\playeridx}_{t} \\
    v^{\playeridx}_{t} + \Delta t \cdot a^{\playeridx}_{t}  \\
\end{array} \right] \in \mathbb{R}^4.
\end{equation}
At time $t$, $p^{\playeridx}_{t} = [p^{\playeridx}_{\lat{},t} ~ p^{\playeridx}_{\lon{},t}]^\T \in \mathbb{R}^2$ is agent $\playeridx$'s planar position, 
$v^{\playeridx}_{t} = [v^{\playeridx}_{\lat{},t} ~ v^{\playeridx}_{\lon{},t}]^\T \in \mathbb{R}^2$ is agent $\playeridx$'s planar velocity, and 
$\ctrl{t}{\playeridx} = [a^{\playeridx}_{\lat{},t} ~ a^{\playeridx}_{\lon{},t}]^\T$ is the commanded planar acceleration.
%applied to the point.
The subscripts \lat{} and \lon{} signify the %latitudinal
lateral (lane offset) and longitudinal (along-lane) directions, respectively.

Each agent's objective $\costfn{\playeridx}$ is composed of stage costs
\begin{equation}
    \label{eq:lc-objective}
    \stagecostfn{t}{\playeridx}(\xstate{t}{}, \ctrl{t}{}; \gtparams{\playeridx, \playeridx}) \equiv 
    w^{\playeridx{}}_1 \underbrace{(p^{\playeridx{}}_{\lat{},t} - \gtparams{\playeridx{},\playeridx{}})^2}_{(\ref{eq:lc-objective}a)} 
    + w^{\playeridx{}}_2 \underbrace{\| v^{\playeridx}_t - v_d \|_2^2}_{(\ref{eq:lc-objective}b)} 
    + w^{\playeridx{}}_3 \underbrace{\| \ctrl{t}{\playeridx{}} \|_2^2}_{(\ref{eq:lc-objective}c)},
\end{equation}
% \david{no vec notation} \david{also shouldn't it be $p_{y, t}^i$ (y and i)?}
where predetermined weights $w^i_j > 0, \forall i, j$, set %component costs' 
relative priorities between competing incentives: tracking a desired lane offset parameter (\ref{eq:lc-objective}a), maintaining a desired velocity (\ref{eq:lc-objective}b), and minimizing control costs (\ref{eq:lc-objective}c).

Each agent also considers a collision-avoidance constraint, which couples its decision with that of the other agent and is parameterized by the safety buffer $\delta \in \mathbb{R}$:
\begin{equation}
    \label{eq:collision-avoidance-constraint}
    \| p^{1}_t - p^{2}_t \|^2_2 \geq \delta^2, \qquad \qquad\forall t \in [\horizon].
\end{equation}
Finally, we enforce upper and lower bounds on the state and controls to ensure each vehicle remains on the road and produces realistic control inputs.
We provide additional details and hyperparameter values below.

% \subsubsection{Road Configuration} We consider a two-lane road with lane width $l_w = \SI{2.0}{\meter}$.
% The road spans from $0$ to $\SI{4.0}{\meter}$ in the lateral direction.
% accommodating two lanes.

\subsubsection{Vehicle Dynamics} Each agent $i$ evolves according to planar double-integrator dynamics as specified in \cref{eq:exp-planar-di-dynamics}.
The discrete-time sampling interval is $\Delta t = \SI{0.1}{\second}$.

\subsubsection{Initial Conditions} Agent 1 starts at position 
\[p_{init}^1 = [\SI{1.0}{\meter}, \SI{1.0}{\meter}]^\top\] 
with initial velocity
\[v_{init}^1 = [\SI{0.0}{\meter\per\second}, \SI{1.0}{\meter}]^\top.\]
Agent 2 starts at position 
\[p_{init}^2 = [\SI{3.2}{\meter},\SI{0.9}{\meter}]^\top\]
with initial velocity 
\[v_{init}^2 = [\SI{0.0}{\meter\per\second},\SI{1.0}{\meter\per\second}]^\top.\]
Both agents have desired velocity 
\[v_d = [\SI{0}{\meter\per\second},\SI{2.0}{\meter\per\second}]^\top.\]

\subsection{Objective Function Parameters}

The stage cost $\ell_t^i$ defined in \cref{eq:lc-objective} uses the weight parameters
\begin{align}
w_1^i &= 1.0 && \text{(lane tracking weight),} \label{eq:lc-weight-1} \\
w_2^i &= 0.5 && \text{(velocity tracking weight), and} \label{eq:lc-weight-2} \\
w_3^i &= 0.1 && \text{(control effort weight).} \label{eq:lc-weight-3}
\end{align}

\subsubsection{True Parameters} Both agents desire to track the middle of the top lane: $\gtparams{1*} = \gtparams{2*} = \gtparams{(1,1)* = \gtparams{(2,2)*}} = \nicefrac{l_w}{2} = \SI{1.0}{\meter}$.
For the parameter sweep in \cref{fig:forward-lc-sweep}, we vary agents' parameters of each other's desired lane offsets:
\begin{align}
\gtparams{(1,2)*}, \gtparams{(2,1)*} \in [\SI{0.5}{\meter} , \SI{4.0}{\meter}] \times [\SI{0.5}{\meter} , \SI{4.0}{\meter}].
\end{align}
We generate 64 simulation instances with true parameters from the provided ranges.

\subsection{Constraints and Bounds}
In our experiments, we choose appropriate values for
% mass ($\mass = \SI{1}{\kilogram}$), 
velocity and acceleration parameters.
% in order to simplify the presentation of our work.
Adjusting these parameters to standard ranges used in driving should result in scaled changes in outputs, and we expect our method generalizes in a straightforward manner. 

\subsubsection{State Constraints} Vehicle positions are bounded to stay on the road,
\begin{align}
p_{lat,t}^i &\in [\SI{0.0}{\meter}, \SI{4.0}{\meter}], \\
p_{lon,t}^i &\in [\SI{0.0}{\meter}, \SI{50.0}{\meter}],
\end{align}
and velocities are bounded to maintain certain speed limits:
\begin{align}
v_{\lat,t}^i &\in [\SI{-10.0}{\meter\per\second}, \SI{10.0}{\meter\per\second}] \\
v_{\lon,t}^i &\in [\SI{0.0}{\meter\per\second}, \SI{10.0}{\meter\per\second}]
\end{align}

\subsubsection{Control Constraints} Acceleration inputs are limited to:
\begin{equation}
a_{\lat,t}^i, a_{\lon,t}^i \in [\SI{-5}{\meter\per\second\squared}, \SI{3}{\meter\per\second\squared}].
\end{equation}

\subsubsection{Safety Constraint} 
The safety buffer constraint \cref{eq:collision-avoidance-constraint} enforces a minimum distance of $\delta = \SI{2.0}{\meter}$ between vehicles at all times.

\subsection{Generating Ground Truth Receding Horizon Play}
To generate ground truth fictitious gameplay trajectories for a set of parameters $\gtparams{1,1}, \gtparams{1,2}, \gtparams{2,1}, \gtparams{2,2}$, each agent solves a hypothesized Nash game $\Gamma(\secondgtparams{i})$ in a receding horizon manner over a horizon of $150$ steps (\SI{15.0}{\second}).
Each game generates a plan over a prediction horizon of $\horizon = 15$ steps (\SI{1.5}{\second}), and re-planning occurs at every third time step.
These parameters provide sufficient planning horizon and simulation length to observe the complex behavior described in \cref{fig:forward-lc-sweep}, including initial deadlock and successful lane change maneuvers.

\subsection{Observation Model and Noise}

\subsubsection{Measurement Setup} The third-party observer receives noisy position measurements
\begin{equation}
y_t^i \sim \mathcal{N}(p_t^i, c^2 I_2),
\end{equation}
where standard deviation hyperparameter $c = 0.1 \SI{}{\meter}$.
Position observations are available at every time step, and observation model
\begin{equation}
    G(\xstate{t}{i}) = \left[\begin{array}{cccc}
         1 & 0 & 0 & 0  \\
         0 & 1 & 0 & 0
    \end{array}\right] \xstate{t}{i}. 
\end{equation}

\subsection{Solving \texorpdfstring{$\Gamma(\secondgtparams{i})$}~~  within the Inverse Game}
In the process of inferring the level-2 parameters,
% using \cref{alg:level-2 inference}, 
we utilize optimization hyperparameters
\begin{align}
K &= 40 && \text{(maximum iterations),} \\
\epsilon &= 0.1 && \text{(convergence threshold), and} \\
\alpha &= 0.1 && \text{(initial learning rate).}
\end{align}

To generate ground truth fictitious gameplay trajectories for a set of parameters $\gtparams{(1,1)*}, \gtparams{(1,2)*}, \gtparams{(2,1)*}, \gtparams{(2,2)*}$, each agent solves a hypothesized Nash game $\Gamma(\secondgtparams{\playeridx*})$ in a receding horizon manner over a horizon of $150$ steps (\SI{15.0}{\second}).
Each game generates a plan over a prediction horizon of $\horizon = 15$ steps (\SI{1.5}{\second}), and re-planning occurs at every third time step.
These parameters provide sufficient planning horizon and simulation length to observe the complex behavior described in \cref{fig:forward-lc-sweep}, including initial deadlock and successful lane change maneuvers.

% \subsubsection{Horizon Parameters} Each agent solves its hypothesized Nash game $\Gamma(\Params{i})$ over a prediction horizon using $\horizon = 15$ steps (\SI{1.5}{\second}) of data. \hmzh{Re-planning occurs at every third time step.}

% \subsubsection{Simulation Duration} The total simulation runs for \SI{15.0}{\second} (150 time steps), providing sufficient data for both initial deadlock and subsequent lane change maneuvers.

\subsection{Algorithm Configuration}

\begin{algorithm}[t]
\SetAlgoLined
\caption{Level-2 Inverse Game Solver}
\label{alg:level-2 inference}
\KwData{Observation data $\ObsTraj{}$, initial parameter guess $\GtParamsguess{}$, maximum iteration number $K$, convergence threshold $\epsilon > 0$}
\KwResult{Estimated parameter $\GtParamsguess{}$}
\For{$k=0,1,\dots,K$}{
    Obtain $\nabla \loss(\GtParamsguess{})$ via \eqref{eq:loss gradient} and a differentiable MCP solver \\
    Gradient descent update with stepsize $\alpha > 0$: $\GtParamsguess{} \gets \GtParamsguess{} - \alpha \nabla \loss (\GtParamsguess{})$ \\
    % Record solution: $\GtParamsguess{}\gets \GtParamsguess{}{'}$ \\
    Terminate if $\loss(\GtParamsguess{})\le \epsilon$
}
\Return{$\GtParamsguess{}$}
\end{algorithm}

\subsubsection{Level-2 Inverse Game Solver}
At each iteration of the online inference, we apply our offline inference method using optimization parameters
\begin{align}
K &= 40 && \text{(maximum iterations),} \\
\epsilon &= 0.1 && \text{(convergence threshold), and} \\
\alpha &= 0.1 && \text{(initial learning rate)}.
\end{align}

% The learning rate follows an adaptive schedule: $\alpha_k = 0.05 / (1 + 0.01k)$ to ensure convergence stability.

\subsubsection{Online Inference} 
We solve level-2 inverse problem in real-time on a scenario with dynamically changing parameters.
% \cref{alg:level-2 inference}
Our proposed method assumes a static parameter value over the horizon it analyzes.
We introduce a receding horizon algorithm which analyzes $\Tsim = 15$ steps (\SI{1.5}{\second}) of data at a time, so we infer parameters from time $t = 15$ until $t = 150$, the number of steps in the simulation.
To run level-2 inference online, we call our static method in each iteration and then update the parameters for the receding horizon method according to the results of the static inference.
% Within each loop, the algorithm sets up the arguments for the call to \cref{alg:level-2 inference}, makes the call, and then prepares for the next iteration using the results.

\subsubsection{Parameter Initialization}
The initial parameter estimates are set to $\hat{\theta}^{i,j} = \SI{2.0}{\meter}$ for all $i,j$ because it indicates maximum uncertainty (center of the road).
Future inverse games are solved provide the previous parameter estimate.

% \david{Increase reference font size.}

\subsection{MCP Solver Configuration}
We use the PATH solver \cite{dirkse1995path} for computing Nash equilibria in each hypothesized game $\Gamma(\hat{\Theta}^i)$ employs
\begin{align}
\text{Convergence tolerance} &= 10^{-6}, \\
\text{Maximum iterations} &= 10^5, \text{ and } \\
\text{Complementarity tolerance} &= 10^{-2}.
\end{align}
Calls to the PATH solver require an initial guess for all primal and dual variables.
These are chosen given an initial state, which is propagated forward across the provided horizon with a zero-control trajectory for each agent.
All dual variables are initialized to 0.

% \subsubsection{Numerical Stability} To ensure robust optimization, we apply logarithmic barrier regularization with weight $\mu = 10^{-6}$ for parameter bound constraints and add $10^{-8} \cdot I$ to the Hessian for positive definiteness.

% \subsection{\todo{Prediction Costs Over Time for Lane Change Scenarios}}
% \label{appendix:lc-prediction-costs}

% \input{figs/mismatched_prediction_costs_LCs}

% \subsection{\todo{Additional Plots for inD Inference}}
% \label{appendix:ind-inference-plots}
% \todo{Switch out these plots if some are interesting.}

% \input{figs/inD_figure_extra}

% \include{include/lq_source_code}

\section{Supplement D: Notation Reference}
In \cref{tab:notation_game_setup,tab:notation_costs_params,tab:notation_level2,tab:notation_lq,tab:notation_homogeneous,tab:notation_mcp,tab:notation_algorithm_example}, we include a complete list of the notation used in this work, organized by category, symbol, definition, and description.

% \david{check formatting, be consistent with abbreviations}
% \todo{Write this and reference earlier.}
% \section*{List of Symbols}
% \begin{description}
%   \item[$\alpha$] Angle of incidence
%   \item[$\beta$] Angle of refraction
%   \item[$n$] Refractive index
% \end{description}

\begin{table*}[t]
\caption{Notation Reference I: Game Setup, States, Controls, and Observations}
\small
% \begin{tabular}{@{}p{1.3cm}p{3.2cm}p{9cm}@{}}
\begin{tabularx}{\textwidth}{@{}p{2.5cm}p{4cm}X@{}}
\toprule
\textbf{Category} & \textbf{Symbol} & \textbf{Description} \\
\midrule
\multirow{5}{*}{\textbf{Game Setup}}
& $\numplayers$ & Number of agents in the game (positive integer) \\
& $\horizon$ & Length of the finite time horizon (positive integer) \\
& $[d]$ & Set $\{1,2,\ldots,d\}$ for positive integer $d$ \\
& $t \in [\horizon]$ & Discrete time step index \\
& $\playeridx, \secondidx \in [\numplayers]$ & Primary and secondary agent indices \\
\midrule
\multirow{9}{*}{\textbf{States}}
& $\numstates{\playeridx}$ & State dimension of agent $\playeridx$ (positive integer) \\
& $\xstate{t}{\playeridx} \in \mathbb{R}^{\numstates{\playeridx}}$ & State of agent $\playeridx$ at time $t$ \\
& $\numstates{} = \sum_{\playeridx=1}^{\numplayers} \numstates{\playeridx}$ & Total joint state dimension \\
& $\xstate{t}{} \equiv \{\xstate{t}{1}, \xstate{t}{2}, \ldots, \xstate{t}{\numplayers}\}$ & Joint state vector of all agents at time $t$ \\
& $\xstate{t}{-\playeridx} \equiv \xstate{t}{} \setminus \{\xstate{t}{\playeridx}\}$ & Joint state excluding agent $\playeridx$ at time $t$ \\
& $\StateTraj{\playeridx} \equiv \{\xstate{1}{\playeridx}, \xstate{2}{\playeridx}, \ldots, \xstate{\horizon}{\playeridx}\}$ & State trajectory of agent $\playeridx$ over horizon $\horizon$ \\
& $\StateTraj{} \equiv \{\StateTraj{1},\StateTraj{2},\ldots,\StateTraj{\numplayers}\}$ & Joint state trajectory of all agents over horizon \\
& $\StateTraj{-\playeridx} \equiv \StateTraj{} \setminus \{\StateTraj{\playeridx}\}$ & Joint state trajectory excluding agent $\playeridx$ \\
& $\initstate{}$ & Joint initial state \\
\midrule
\multirow{7}{*}{\textbf{Controls}}
& $\numctrls{\playeridx}$ & Control dimension for agent $\playeridx$ (positive integer) \\
& $\numctrls{} = \sum_{\playeridx=1}^{\numplayers} \numctrls{\playeridx}$ & Total joint control dimension \\
& $\ctrl{t}{\playeridx} \in \mathbb{R}^{\numctrls{\playeridx}}$ & Control of agent $\playeridx$ at time $t$ \\
& $\ctrl{t}{} = \{\ctrl{t}{1}, \ctrl{t}{2}, \ldots, \ctrl{t}{\numplayers}\}$ & Joint control vector at time $t$ (analogous to $\xstate{t}{}$) \\
& $\CtrlTraj{\playeridx} \equiv \{\ctrl{1}{\playeridx}, \ctrl{2}{\playeridx}, \ldots, \ctrl{\horizon}{\playeridx}\}$ & Control trajectory of agent $\playeridx$ over horizon \\
& $\CtrlTraj{} \equiv \{\CtrlTraj{1},\CtrlTraj{2},\ldots,\CtrlTraj{\numplayers}\}$ & Joint control trajectory of all agents over horizon \\
& $\dynamics{\playeridx}$ & Dynamics function \\
% \cref{eq:dynamics} \\
\midrule
\multirow{7}{*}{\parbox{1.3cm}{\textbf{Obser-vations}}}
& $\numobs{\playeridx}$ & Observation dimension for agent $\playeridx$ (positive integer) \\
& $\obs{t}{\playeridx} \in \mathbb{R}^{\numobs{\playeridx}} \cup \{\emptyset\}$ & Observation of agent $\playeridx$ at time $t$ (possibly missing) \\
& $\obs{t}{} \equiv \{\obs{t}{1}, \obs{t}{2}, \ldots, \obs{t}{\numplayers}\}$ & Joint observation vector at time $t$ \\
& $\ObsTraj{\playeridx} \equiv \{\obs{1}{\playeridx}, \obs{2}{\playeridx}, \ldots, \obs{\horizon}{\playeridx}\}$ & Observation trajectory of agent $\playeridx$ over time \\
& $\ObsTraj{} \equiv \{\ObsTraj{1},\ObsTraj{2},\ldots,\ObsTraj{\numplayers}\}$ & Joint observation trajectory (all agents, all times) \\
& $\ObservationFn(\StateTraj{}, \CtrlTraj{})$ & Observation function mapping state and control trajectories to observations \\
& $p(\ObsTraj{}|\StateTraj{}, \CtrlTraj{})$ & Likelihood: conditional probability of observations given trajectories \\
\bottomrule
\end{tabularx}
\label{tab:notation_game_setup}
\end{table*}

\begin{table*}[t]
\caption{Notation Reference II: Parameters, Costs, Constraints, and Nash Games}
\small
% \begin{tabular}{@{}p{1.3cm}p{3.2cm}p{9cm}@{}}
\begin{tabularx}{\textwidth}{@{}p{2.5cm}p{4cm}X@{}}
\toprule
\textbf{Category} & \textbf{Symbol} & \textbf{Description} \\
\midrule
\multirow{8}{*}{\parbox{1.3cm}{\textbf{Para-meters}}}
& $\numparams{\playeridx}$ & Parameter dimension for agent $\playeridx$ (positive integer) \\
& $\paramspace{\playeridx} \subseteq \mathbb{R}^{\numparams{\playeridx}}$ & Parameter space for agent $\playeridx$ \\
& $\jointparamspace \equiv \paramspace{1} \times \cdots \times \paramspace{\numplayers}$ & Joint parameter space (all agents) \\
& $\gtparams{\playeridx} \in \paramspace{\playeridx}$ & Level-1 parameter vector for agent $\playeridx$ \\
& $\GtParams \equiv \{\gtparams{1}, \gtparams{2}, \ldots, \gtparams{\numplayers}\}$ & Joint level-1 parameter set for all agents \\
& $\gtparams{*}$ & Ground truth parameters (level-1 inference) \\
& $\params{\playeridx} \in \paramspace{\playeridx}$ & Estimate of agent $\playeridx$'s parameters \\
& $\Params = \{\params{1}, \params{2}, \ldots, \params{\numplayers}\}$ & Estimated parameters for all agents \\
\midrule
\multirow{2}{*}{\textbf{Costs}}
& $\stagecostfn{t}{\playeridx}(\xstate{t}{}, \ctrl{t}{}; \gtparams{\playeridx})$ & Stage cost for agent $\playeridx$ at time $t$ \\
& $\costfn{\playeridx}(\StateTraj{}, \CtrlTraj{}; \gtparams{\playeridx})$
& Total cost for agent $\playeridx$ over horizon \\
% \cref{eq:objective}
\midrule
\multirow{4}{*}{\parbox{1.3cm}{\textbf{Con-straints}}}
& $\numequalitylagrange$ & Total number of equality constraints \\
& $\numconstlagrange$ & Total number of inequality constraints \\
& $\equalityconstraint{\playeridx}(\StateTraj{}, \CtrlTraj{}) = 0$ & Equality constraint function for agent $\playeridx$ \\
& $\inequalityconstraint{\playeridx}(\StateTraj{}, \CtrlTraj{}) \geq 0$ & Inequality constraint function for agent $\playeridx$ \\
% & $\equalityConstraintLagrange{\playeridx}$ & Lagrange multipliers for equality constraints \\
% & $\inequalityConstraintLagrange{\playeridx} \geq 0$ & Lagrange multipliers for inequality constraints \\
\midrule
\multirow{5}{*}{\parbox{1.3cm}{\textbf{Nash Game}}}
& $\Gamma(\GtParams, \initstate{}, \horizon)$ & $\numplayers$-agent parameterized Nash game \\
% from \cref{eq:nash-game} \\
& $\Gamma(\GtParams)$ & Simplified game notation (when $\initstate{}, \horizon$ clear from context) \\
& $(\StateTraj{*}, \CtrlTraj{*})$ & Local generalized Nash equilibrium (LGNE) \\
& $\StateTraj{*}$ & LGNE state trajectory \\
& $\CtrlTraj{*}$ & LGNE strategy (equilibrium control trajectory) \\
\bottomrule
\end{tabularx}
\label{tab:notation_costs_params}
\end{table*}

\begin{table*}[t]
\caption{Notation Reference III: Level-2 Formulation}
\small
% \begin{tabular}{@{}p{1.3cm}p{3.2cm}p{9cm}@{}}
\begin{tabularx}{\textwidth}{@{}p{2.5cm}p{4cm}X@{}}
\toprule
\textbf{Category} & \textbf{Symbol} & \textbf{Description} \\
\midrule
\multirow{9}{*}{\parbox{1.3cm}{\textbf{Level-2 Params}}}
& $\gtparams{\playeridx,\playeridx}$ & Agent $\playeridx$'s true objective parameter (known to $\playeridx$) \\
& $\gtparams{\playeridx,-\playeridx} $ & Agent $\playeridx$'s estimates of other agents' parameters ($\{\gtparams{\playeridx,1}\!\!,..., \gtparams{\playeridx,\playeridx-1}\!, \gtparams{\playeridx,\playeridx+1}\!\!,...,\gtparams{\playeridx,\numplayers}\}$)  \\
% $\!\! \equiv \{\gtparams{\playeridx,1}\!\!\!\!\!,..., \gtparams{\playeridx,\playeridx-1}\!\!, \gtparams{\playeridx,\playeridx+1}\!\!\!\!\!\!\!\!,...,\gtparams{\playeridx,\numplayers}\}$ 
% &$\equiv \$ & \\
% $\equiv \{\gtparams{\playeridx,1}, \ldots, \gtparams{\playeridx,\playeridx-1}, \gtparams{\playeridx,\playeridx+1}, \ldots, \gtparams{\playeridx,\numplayers}\}$ 
% & (Agent $\playeridx$'s estimates of other agents' parameters) \\
& $\secondgtparams{\playeridx} \equiv \{\gtparams{\playeridx,\playeridx}, \gtparams{\playeridx,-\playeridx}\} \in \jointparamspace$ & Agent $\playeridx$'s full parameter set in level-2 game \\
& $\secondgtparams{} \equiv \{\secondgtparams{1}, \secondgtparams{2}, \ldots, \secondgtparams{\numplayers}\}$ & Complete set of all agents' level-2 parameters (in $\jointparamspace \times \cdots \times \jointparamspace$) \\
& $\secondgtparams{*}$ & Ground truth level-2 parameters \\
& $\params{\playeridx,\playeridx}$ & Estimate of agent $\playeridx$'s own objective parameter \\
& $\params{\playeridx,-\playeridx}$ & Estimate of agent $\playeridx$'s estimates of others' objectives \\
& $\GtParamsguess{\playeridx} \equiv \{\params{\playeridx,\playeridx}, \params{\playeridx,-\playeridx}\} \in \jointparamspace$ & Third-party observer's estimate of $\secondgtparams{\playeridx}$ \\
& $\GtParamsguess{} \equiv \{\GtParamsguess{1}, \GtParamsguess{2}, \ldots, \GtParamsguess{\numplayers}\}$ & All parameters to infer in level-2 inverse game (in $\jointparamspace \times \cdots \times \jointparamspace$) \\
\midrule
\multirow{6}{*}{\parbox{1.3cm}{\textbf{Hypothe-sized Equilibria}}}
% & $\Gamma(\secondgtparams{\playeridx})$ or
& $\Gamma(\GtParamsguess{\playeridx})$ & Agent $\playeridx$'s hypothesized Nash game \\
& $\hypothesizedStateTraj{\playeridx} \equiv \{\hypStateTraj{\playeridx,\secondidx}\}_{\secondidx=1}^\numplayers$ & Hypothesized LGNE state trajectories in $\Gamma(\secondgtparams{\playeridx})$ \\
& $\hypothesizedCtrlTraj{\playeridx} \equiv \{\hypCtrlTraj{\playeridx,\secondidx}\}_{\secondidx=1}^\numplayers$ & Hypothesized LGNE control strategies in $\Gamma(\secondgtparams{\playeridx})$ \\
& $\hypStateTraj{\playeridx,\secondidx}$ & Player $\secondidx$'s hypothesized LGNE state trajectory in agent $\playeridx$'s game \\
& $\hypCtrlTraj{\playeridx,\secondidx}$ & Player $\secondidx$'s hypothesized LGNE control strategy in agent $\playeridx$'s game \\
& $\hypCtrlTraj{\playeridx,\playeridx} \equiv \{\hypctrl{t}{\playeridx,\playeridx}\}_{t=1}^{\horizon}$ & Agent $\playeridx$'s own controls extracted from $\hypothesizedCtrlTraj{\playeridx}$ \\

& $\hat\zeta^{i*} = (\hat\zeta^{i*}_{\hypothesizedStateTraj{}}, \hat\zeta^{i*}_{\hypothesizedCtrlTraj{}})$ & Player $\playeridx$'s own LGNE state and control trajectories in its own hypothesized game $\Gamma(\GtParamsguess{i})$ \\
% $\todo{explain}\todo{Add $\zeta^i, \zeta^i_x, \zeta^i_u$} \\
\midrule
\multirow{3}{*}{\parbox{1.3cm}{\textbf{Level-2 Trajectories}}}
& $\CtrlTraj{} = \{\hypCtrlTraj{1,1}, \hypCtrlTraj{2,2}, \ldots, \hypCtrlTraj{\numplayers,\numplayers}\}$ & Joint level-2 control trajectory (each agent uses its own hypothesized control) \\
& $\StateTraj{} = \{\hypStateTraj{1,1}, \hypStateTraj{2,2}, \ldots, \hypStateTraj{\numplayers,\numplayers}\}$ & Joint level-2 state trajectory (resulting from joint level-2 controls) \\
& & \\
\midrule
\textbf{Loss} & $\loss(\GtParamsguess{})$ & Level-2 inverse game loss function \\
\bottomrule
\end{tabularx}
\label{tab:notation_level2}
\end{table*}

\begin{table*}[t]
\caption{Notation Reference IV: Linear-Quadratic Games}
\small
% \begin{tabular}{@{}p{1.3cm}p{3.2cm}p{9cm}@{}}
\begin{tabularx}{\textwidth}{@{}p{2.5cm}p{4cm}X@{}}
\toprule
\textbf{Category} & \textbf{Symbol} & \textbf{Description} \\
\midrule
\multirow{3}{*}{\textbf{Costs (LQ)}}
& $Q(\gtparams{\playeridx})$ & Positive semidefinite state cost matrix parameterized by $\gtparams{\playeridx}$ \\
& $R^{\playeridx}$ & Fixed positive definite control cost matrix for agent $\playeridx$ \\
& $\stagecostfn{t}{\playeridx}(\xstate{t}{}, \ctrl{t}{}; \gtparams{\playeridx})$ & Quadratic stage cost \cref{eq:lq-cost} \\
% $= \frac{1}{2}\xstate{t}{}{}^\T Q(\gtparams{\playeridx})\xstate{t}{} + \frac{1}{2}(\ctrl{t}{\playeridx}){}^\T R^{\playeridx} \ctrl{t}{\playeridx}$ (Quadratic stage cost) \\
\midrule
\multirow{2}{*}{\parbox{1.3cm}{\textbf{ Dynamics (LQ)}}}
& $\LinearKKTA^{\playeridx}, \LinearKKTB^{\playeridx}$ & State transition and control input matrices for agent $\playeridx$ \\
& $\xstate{t+1}{\playeridx} = \LinearKKTA^{\playeridx} \xstate{t}{\playeridx} + \LinearKKTB^{\playeridx} \ctrl{t}{\playeridx}$ & Linear dynamics for agent $\playeridx$ \\
\midrule
\multirow{8}{*}{\parbox{1.3cm}{\textbf{Dual and Total Variables (LQ)}}}
& $\dynamicsLagrange{t}{\playeridx}$ & Lagrange multiplier for dynamics constraint at time $t$ \\
& $\dynamicsLagrange{t}{} \equiv \{\dynamicsLagrange{t}{\playeridx}\}_{\playeridx=1}^\numplayers$ & Collection of dynamics multipliers at time $t$ \\
& $\DynamicsLagrange{} \equiv \{\dynamicsLagrange{t}{}\}_{t=1}^{\horizon-1}$ & All dynamics multipliers across time \\
& $\dynamicsInitLagrange{\playeridx}$ & Lagrange multiplier for initial condition $0 = \xstate{1}{\playeridx} - \initstate{\playeridx}$ \\
& $\DynamicsInitLagrange{} \equiv \{\dynamicsInitLagrange{\playeridx}{}\}_{\playeridx=1}^\numplayers$ & Collection of all initial condition multipliers \\
% \midrule
% \multirow{3}{*}{\parbox{1.3cm}{\textbf{LQ Variables}}}
& $\LQtotalvariable \equiv [\StateTraj{}, \CtrlTraj{}, \DynamicsLagrange{}, \DynamicsInitLagrange{}]$ & All primal and dual variables for LQ game \\
& $n_{\bar{z}}$ & Total dimension of $\LQtotalvariable$ ($\horizon(\numstates{} + \numctrls{}) + (\horizon-1)\numstates{} + \numstates{}$) \\
& $\secondLQtotalvariable^{\playeridx} \in \mathbb{R}^{n_{\bar{z}}}$ & Primal/dual variables for agent $\playeridx$'s hypothesized LQ game \\
% \midrule
% \multirow{2}{*}{\parbox{1.3cm}{\textbf{Lagrangian (LQ)}}}
\multirow{11}{*}{\parbox{1.3cm}{\textbf{KKT System Variables (LQ)}}}
& $\LQlagrangian{\playeridx}(\StateTraj{}, \CtrlTraj{}, \DynamicsInitLagrange{}, \DynamicsLagrange{}; \gtparams{\playeridx})$ 
% $\equiv \costfn{\playeridx}(\StateTraj{}, \CtrlTraj{}; \gtparams{\playeridx}) + \dynamicsInitLagrange{\playeridx}(\xstate{1}{\playeridx} - \initstate{\playeridx})$ \\
% & & $+ \sum_{t=1}^{\horizon-1} \dynamicsLagrange{t}{\playeridx}(\xstate{t+1}{\playeridx} - \LinearKKTA^{\playeridx} \xstate{t}{\playeridx} - \LinearKKTB^{\playeridx} \ctrl{t}{\playeridx})$ (
& Lagrangian for agent $\playeridx$ in LQ game \\
\midrule
& $M(\GtParams)$ & KKT matrix encoding linear system for LQ game equilibrium \\
& $S$ & Selection matrix for initial conditions in KKT system \\
& $\bar{Q}(\GtParams)$ & Block-repeated state cost matrix in $M(\GtParams)$ (size $\horizon \numstates{} \times \horizon \numstates{}$) \\
& $\bar{R}$ & Block-repeated control cost matrix in $M(\GtParams)$ (size $\horizon \numctrls{} \times \horizon \numctrls{}$) \\
& $\LinearKKTBlkDiagA \equiv \text{diag}(\LinearKKTA^1, \LinearKKTA^2, \ldots, \LinearKKTA^\numplayers)$ & Block diagonal dynamics matrix \\
& $\LinearKKTBlkDiagB \equiv \text{diag}(\LinearKKTB^1, \LinearKKTB^2, \ldots, \LinearKKTB^\numplayers)$ & Block diagonal control matrix \\
& $\LinearKKTBarA$ & Block matrix with $-\LinearKKTA^{\playeridx}$ and $I$ for dynamics constraints (size $(\horizon-1)\numstates{} \times \horizon\numstates{}$) \\
& $\LinearKKTBarB$ & Block matrix with $-\LinearKKTB^{\playeridx}$ for dynamics constraints (size $(\horizon-1)\numstates{} \times \horizon\numctrls{}$) \\
& $\LinearKKTBarC$ & Selection matrix for initial conditions (size $\numstates{} \times \horizon\numstates{}$) \\
& $\LinearKKTBigBlkDiagA, \LinearKKTBigBlkDiagB, \LinearKKTBigBlkDiagC$ & Transposed/structured versions for dual blocks in $M(\GtParams)$ \\
\midrule
\multirow{2}{*}{\parbox{1.3cm}{\textbf{Solution (LQ)}}}
& $M(\GtParams)\LQtotalvariable + S\initstate{} = 0$ & Compact KKT system for LQ game \\
& $\LQcontrolfunction(\GtParamsguess{\playeridx})$ & Agent $\playeridx$'s LGNE control as function of $\GtParamsguess{\playeridx}$ (subvector of $\secondLQtotalvariable^{\playeridx}$) \\
\bottomrule
\end{tabularx}
\label{tab:notation_lq}
\end{table*}

\begin{table*}[t]
\caption{Notation Reference V: Homogeneous Parameters and Error Bounds}
\small
% \begin{tabular}{@{}p{1.3cm}p{3.2cm}p{9cm}@{}}
\begin{tabularx}{\textwidth}{@{}p{2.5cm}p{4cm}X@{}}
\toprule
\textbf{Category} & \textbf{Symbol} & \textbf{Description} \\
\midrule
\multirow{7}{*}{\parbox{1.3cm}{\textbf{Homo-geneous Level-2 Params}}}
& $\LevelOneGtParamsguess$ & Homogeneous level-2 parameter (level-1 parameters in level-2 parameter space) \\
& $\LevelOneGtParamsguess^{\playeridx} = \LevelOneGtParamsguess^{\secondidx} = \GtParams$ & All agents have identical parameter estimates (level-1 assumption) \\
& $\LevelOneGtParamsguess^*$ & Optimal homogeneous level-2 parameter minimizing $\loss(\LevelOneGtParamsguess)$ \\
& $\constructedlevelonesol^{\playeridx} \equiv \gtparams{*}$ & Homogeneous level-2 params constructed from level-1 ground truth \\
& $\constructedlevelonesol \equiv \{\constructedlevelonesol^{\playeridx}\}_{\playeridx=1}^\numplayers$ & Collection of all constructed homogeneous parameters \\
& $\check{M} \equiv \left(\sum_{\playeridx=1}^\numplayers \frac{1}{\numplayers}M^{-1}(\secondgtparams{\playeridx*})\right)^{-1}$ & Average inverse LQ KKT condition matrix \\
% & $\sigma_{\min}(\cdot), \sigma_{\max}(\cdot)$ & Smallest and largest singular values of a matrix \\
& $E^{\playeridx}$ & Selection matrix extracting agent $\playeridx$'s control from $\LQtotalvariable$ \\
\bottomrule
\end{tabularx}
\label{tab:notation_homogeneous}
\end{table*}

\begin{table*}[t]
\caption{Notation Reference VI: MCP Formulation}
\small
% \begin{tabular}{@{}p{1.3cm}p{3.2cm}p{9cm}@{}}
\begin{tabularx}{\textwidth}{@{}p{2.5cm}p{4cm}X@{}}
\toprule
\textbf{Category} & \textbf{Symbol} & \textbf{Description} \\
\midrule
% \multirow{2}{*}{\parbox{1.3cm}{\textbf{MCP Constr. Counts}}}
% & $\numequalitylagrange$ & Total number of equality constraint multipliers \\
% & $\numconstlagrange$ & Total number of inequality constraint multipliers \\
% \midrule
% \multirow{4}{*}{\parbox{1.3cm}{\textbf{MCP Mult. (Agent $\playeridx$'s Game)}}}
\multirow{9}{*}{\parbox{1.3cm}{\textbf{Dual and Total Variables (MCP)}}}
& $\hypdynamicsLagrange{t}{\playeridx,\secondidx} \in \mathbb{R}^{\numstates{}}$ & Dynamics multiplier at time $t$ for agent $\secondidx$ in $\playeridx$'s hypothesized game \\
& $\hypDynamicsLagrange{\playeridx,\secondidx}$ & Dynamics multipliers for agent $\secondidx$ in $\playeridx$'s hypothesized game over horizon $\horizon$ \\
& $\hypDynamicsLagrange{\playeridx}$ & All dynamics multipliers for agent $\playeridx$'s hypothesized game \\
& $\hypdynamicsInitLagrange{\playeridx,\secondidx} \in \mathbb{R}^{\numstates{}}$ & Initial state multiplier for agent $\secondidx$ in $\playeridx$'s hypothesized game \\
& $\hypDynamicsInitLagrange{\playeridx} \equiv \{\hypdynamicsInitLagrange{\playeridx,\secondidx}\}_{\secondidx=1}^\numplayers$ & All initial state multipliers for agent $\playeridx$'s hypothesized game \\
& $\hypequalityConstraintLagrange{\playeridx,\secondidx} \in \mathbb{R}^{\numequalitylagrange}$ & Equality constraint multipliers for agent $\secondidx$ in $\playeridx$'s hypothesized game \\
& $\hypEqualityConstraintLagrange{\playeridx} \equiv \{\hypequalityConstraintLagrange{\playeridx,\secondidx}\}_{\secondidx=1}^\numplayers$ & All equality multipliers for agent $\playeridx$'s game \\
& $\hypinequalityConstraintLagrange{\playeridx,\secondidx} \in \mathbb{R}^{\numconstlagrange}$ & Inequality constraint multipliers for agent $\secondidx$ in $\playeridx$'s hypothesized game \\
% \midrule
% \multirow{3}{*}{\parbox{1.3cm}{\textbf{MCP Aggr. Mult.}}}
& $\hypInequalityConstraintLagrange{\playeridx} \equiv \{\hypinequalityConstraintLagrange{\playeridx,\secondidx}\}_{\secondidx=1}^\numplayers$ & All inequality multipliers for agent $\playeridx$'s game \\
% \midrule
% \multirow{2}{*}{\parbox{1.3cm}{\textbf{MCP Total Vars.}}}
& $\totalvariable_{\playeridx} \equiv [\hypothesizedStateTraj{\playeridx}, \hypothesizedCtrlTraj{\playeridx}, \hypDynamicsLagrange{\playeridx}, \hypDynamicsInitLagrange{\playeridx}, \hypEqualityConstraintLagrange{\playeridx}, \hypInequalityConstraintLagrange{\playeridx}]$ & All primal/dual variables for agent $\playeridx$'s hypothesized game \\
% \midrule
% \textbf{MCP Lag.} 
\midrule
\multirow{5}{*}{\parbox{1.3cm}{\textbf{Problem (MCP)}}} & $\lagrangian{\playeridx,\secondidx}(\totalvariable_{\playeridx}; \params{\playeridx,\secondidx})$ & Lagrangian for agent $\secondidx$ in agent $\playeridx$'s hypothesized game $\Gamma(\GtParamsguess{\playeridx})$ \\
% \midrule
% \multirow{5}{*}{\parbox{1.3cm}{\textbf{MCP Problem}}}
& $F_{\text{eq}}(\totalvariable_{\playeridx}; \GtParamsguess{\playeridx})$ & MCP equality constraint and stationarity vector \\
& $F_{\text{ineq}}(\totalvariable_{\playeridx}; \GtParamsguess{\playeridx})$ & MCP inequality constraint vector \\
% & $0 = F_{\text{eq}}(\totalvariable_{\playeridx}; \GtParamsguess{\playeridx})$ & MCP equality conditions \\
& $0 \leq \hypInequalityConstraintLagrange{\playeridx} \perp F_{\text{ineq}}(\totalvariable_{\playeridx}; \GtParamsguess{\playeridx}) \geq 0$ & MCP complementarity conditions \\
& $\totalvariable^*_{\playeridx}$ & Solution to MCP (LGNE of $\Gamma(\GtParamsguess{\playeridx})$) \\
\midrule
\multirow{3}{*}{\parbox{1.3cm}{\textbf{Gradients (MCP)}}}
& $\nabla \totalvariable^*_{\playeridx}(\GtParamsguess{\playeridx})$ & Gradient of LGNE solution w.r.t. $\GtParamsguess{\playeridx}$ (via implicit function theorem) \\
& $\thetavec$ & Vectorized form of $\GtParamsguess{}$ for gradient computation \\
& $\xstate{t}{(\playeridx,\playeridx)*}$ & Agent $\playeridx$'s LGNE state at time $t$ in $\Gamma(\GtParamsguess{\playeridx})$ \\
\bottomrule
\end{tabularx}
\label{tab:notation_mcp}
\end{table*}

\begin{table*}[t]
\caption{Notation Reference VII: Algorithm and Lane Change Example}
\small
% \begin{tabular}{@{}p{1.3cm}p{3.2cm}p{9cm}@{}}
\begin{tabularx}{\textwidth}{@{}p{2.5cm}p{4cm}X@{}}
\toprule
\textbf{Category} & \textbf{Symbol} & \textbf{Description} \\
\midrule
\multirow{6}{*}{\textbf{Algorithm}} 
% \todo{Which of these are not needed anymore?}
& $K$ & Maximum number of iterations for Algorithm 1 \\
& $\epsilon > 0$ & Convergence threshold for loss function \\
& $\alpha > 0$ & Step size (learning rate) for gradient descent \\
& $\GtParamsguess{}{'}$ & Updated parameter estimate after gradient descent step \\
& $\nabla \loss(\GtParamsguess{})$ &
Gradient of loss function with respect to $\GtParamsguess{}$ \\
& $\frac{\partial \loss}{\partial \thetavec_{\secondidx}}$ & Chain rule gradient with respect to parameter $\thetavec$, defined in \cref{eq:loss gradient} \\
% $= \sum_{\playeridx=1}^\numplayers \sum_{t=1}^\horizon \frac{\partial \loss}{\partial \ObservationFn(\xstate{t}{(\playeridx,\playeridx)*})} \cdot \frac{\partial \ObservationFn(\xstate{t}{(\playeridx,\playeridx)*})}{\partial \xstate{t}{(\playeridx,\playeridx)*}} \cdot \frac{\partial \xstate{t}{(\playeridx,\playeridx)*}}{\partial \thetavec_{\secondidx}}$ (Chain rule gradient) \\
\midrule
\multirow{12}{*}{\parbox{1.3cm}{\textbf{Lane Change Example}}}
& $l_w \in \mathbb{R}$ & Lane width (meters) \\
& $v_d^{\playeridx} \in \mathbb{R}^2$ & Desired velocity for agent $\playeridx$ \\
& $\Delta t$ & Discrete time sampling interval (seconds) \\
& $p_t^{\playeridx} = [p_{\text{lat},t}^{\playeridx} ~ p_{\text{lon},t}^{\playeridx}]^\T \in \mathbb{R}^2$ & Agent $\playeridx$'s planar position at time $t$ (lateral and longitudinal) \\
& $v_t^{\playeridx} = [v_{\text{lat},t}^{\playeridx} ~ v_{\text{lon},t}^{\playeridx}]^\T \in \mathbb{R}^2$ & Agent $\playeridx$'s planar velocity at time $t$ \\
& $\ctrl{t}{\playeridx} = [a_{\text{lat},t}^{\playeridx} ~ a_{\text{lon},t}^{\playeridx}]^\T$ & Agent $\playeridx$'s commanded planar acceleration/control at time $t$ \\
& $\xstate{t+1}{\playeridx} = [p_{t+1}^{\playeridx} ~ v_{t+1}^{\playeridx}]^\T$ & Double-integrator dynamics \cref{eq:exp-planar-di-dynamics} \\
& $w_{\secondidx}^{\playeridx} > 0$ & Weight for agent $\playeridx$'s cost component $\secondidx$ \\
& $\stagecostfn{t}{\playeridx}(\xstate{t}{}, \ctrl{t}{}; \gtparams{\playeridx,\playeridx})$ & Lane change stage cost defined using \cref{eq:lc-objective,eq:lc-weight-1,eq:lc-weight-2,eq:lc-weight-3} \\
% $\equiv w_1^{\playeridx}(p_{\text{lat},t}^{\playeridx} - \gtparams{\playeridx,\playeridx})^2 + w_2^{\playeridx} \|v_t^{\playeridx} - v_d\|_2^2 + w_3^{\playeridx} \|\ctrl{t}{\playeridx}\|_2^2$ 
& $\delta \in \mathbb{R}$ & Safety buffer distance for collision avoidance (meters) \\
% & $\|p_t^1 - p_t^2\|_2^2 \geq \delta^2, \forall t \in [\horizon]$ & Collision avoidance constraint \\
& $\secondgtparams{*}$ & Ground truth level-2 parameters for lane-change scenario \\
% , defined in \cref{eq:lane-change-gt-params} \\
% $\equiv \{\{\gtparams{(1,1)*}, \gtparams{(1,2)*}\}, \{\gtparams{(2,1)*}, \gtparams{(2,2)*}\}\}$ (Ground truth parameters for two-agent scenario) \\
& $\gtparams{(1,1)*} = \gtparams{(2,2)*} = l_w/2$ & Both agents' true desired lane offsets (midpoint of top lane) \\
\bottomrule
\end{tabularx}
\label{tab:notation_algorithm_example}
\end{table*}

\end{document}